\newtheorem{thm}{Theorem}[section]
\newtheorem{lem}[thm]{Lemma}
\newtheorem{prop}[thm]{Proposition}
\newtheorem*{prob*}{Problem}
\newtheorem*{thm*}{Theorem}
\theoremstyle{definition}
\newtheorem*{defn*}{Definition}
\newtheorem{rem}[thm]{Remark}
\numberwithin{equation}{section}
\newcommand{\C}{\mathbb C}
\newcommand{\Y}{\mathbb Y}
\newcommand{\Z}{\mathbb Z}
\newcommand{\Zp}{\mathbb Z_{\geq 0}}
\newcommand{\DC}{\mathcal{D}}
\newcommand{\MC}{\mathcal{M}}
\newcommand{\ZC}{\mathcal{Z}}
\DeclareMathOperator{\Schur}{Schur}
\DeclareMathOperator{\Plancherel}{Plancherel}
\DeclareMathOperator{\Pf}{Pf}
\begin{document}
\title[Matrix kernels for partitions]
 {\bf{Matrix kernels for measures on partitions}}

\author{Eugene Strahov}

\date{April 7, 2008}

\thanks{Department of Mathematics, The Hebrew University of
Jerusalem, Givat Ram, Jerusalem 91904. E-mail:
strahov@math.huji.ac.il. Supported by US-Israel Binational Science Foundation (BSF) Grant No. 2006333.\\
}

\keywords{Random partitions, symmetric functions, random Young
diagrams, correlation functions, Pfaffian point processes}

\commby{}
\begin{abstract}
We consider the problem of computation of the correlation
functions for the $z$-measures  with the deformation (Jack)
parameters $2$ or $1/2$. Such measures on partitions are
originated from the representation theory of the infinite
symmetric group, and in many ways are similar to the ensembles of
Random Matrix Theory of $\beta=4$ or $\beta=1$ symmetry types.
For a certain class of such measures we show that correlation
functions can be represented as Pfaffians including $2\times 2$
matrix valued kernels, and compute these kernels explicitly. We
also give contour integral representations for correlation kernels
of closely connected measures on partitions.
\end{abstract}
\maketitle
\section{Introduction}
\subsection{Preliminaries and formulation of the
problem}\label{SectionFormulation} It was observed by many authors
that number of problems of statistical mechanics and of
combinatorial probability can be reduced to investigation of random
Young diagrams. In particular, increasing subsequences problems,
certain last-passage percolation and growth models, random  tilings
are all examples of such a situation, see Baik, Deift, and Johansson
\cite{BaikDeiftJohansson}, Baik and Rains
\cite{BaikRains1,BaikRains2, BaikRains3}, Johansson
\cite{Johansson1, Johansson2, Johansson3}, Pr\"{a}hover and Spohn
\cite{PrahoverSpohn}, Ferrari and Pr\"{a}hover
\cite{FerrariPrahover}, and references therein.

For instance, in  the analysis of the polynuclear growth (PNG) model
in one spatial dimension the relation with measures on Young
diagrams arises as follows, see Ferrari and Pr\"{a}hover for a
recent review \cite{FerrariPrahover}.  Following Ferrari and
Pr\"{a}hover, let us describe the growing surface at time $t$ by an
integer-valued height function $x\rightarrow h(x,t)\in\Z$. For fixed
time $t\in\mathbb{R}$, one considers the height profile
$x\rightarrow h(x,t)$. The height $h$, as $x$ increases, has jumps
of height one at the discontinuity points. By definition, the
dynamics of the model has a deterministic and a stochastic part. The
deterministic part is described by the condition that the up-steps
move to the left with unit speed, and the down steps move to the
right with unit speed. When a pair of up-and down steps collide,
they disappear. The stochastic parts describe  nucleation events
(creation of up-down steps on the surface). It is assumed that once
a pair of up-down steps is created, it follows the deterministic
dynamics. The main problem in this context is to describe the
fluctuations of height profile $h(x,t)$ as time grows. Typically one
considers two cases of particular interest, called the PNG droplet
and the flat PNG. In both cases the investigation of  $h(x,t)$  at
point $x=0$ is reduced to that of the first row of a random Young
diagram. In the first situation (the PNG droplet) the relevant
measure on Young diagrams is the Poissonized Plancherel measure, and
in the second situation (the flat PNG) the relevant measure on Young
diagrams can be understood as the Poissonized Plancherel measure
with Jack parameter $1/2$. Note also that the Plancherel measures
with Jack parameters $2$, $1/2$ also appear in the analysis of the
symmetrized increasing subsequence problems, see Baik and Rains
\cite{BaikRains2}, Forrester, Nagao and Rains \cite{forrester}.

The Plancherel measures mentioned above can be understood  as limits
of the $z$-measures  $\MC_{z,z',\theta,\xi}$ which are probability
measures on the set $\Y$ of all Young diagrams. These measures arise
in the context of the representation theory of the infinite
symmetric group. Namely, let $S(\infty)$ be the infinite symmetric
group whose elements are finite permutations of $\{1,2,\ldots\}$.
Kerov, Olshanski, and Vershik \cite{kerov1} (see also a review paper
by Olshanski \cite{olshanski2003}) constructed a family
$\{T_z:z\in\C\}$ of unitary representations of two copies $G$ of
$S(\infty)$. Such representations were called the generalized
regular representations of the infinite symmetric group. Each
representation $T_{z}$ with $z\neq 0$ can be realized in the Hilbert
space $L^2\left(\mathfrak{S},\mu_t\right)$, where $\mathfrak{S}$ is
a compatification of $S(\infty)$ (called the space of virtual
permutations), $\{\mu_t\}$ is a family of $G$-quasiinvariant
measures on virtual permutations, and $t=|z|^2$. Denote by
$\chi^{z}$ the character of $T_z$. According to Kerov, Olshanski and
Vershik \cite{kerov1}, $\chi^z$ is a central positive definite
function on $S(\infty)$ such that, for any $n=1,2,\ldots$, its
restriction to $S(n)$ is
$$
\chi_n^{z}= \sum\limits_{\lambda\in\Y_n}\MC_{z,\bar{z}}^{(n)}
(\lambda)\frac{\chi^{\lambda}}{\chi^{\lambda}(e)}.
$$
Here $\chi^{\lambda}$ denotes the irreducible character of $S(n)$
parameterized by the Young diagram $\lambda$, and $\Y_n$ is the
set of all Young diagrams with $n$ boxes. It can be shown that for
every $n=1,2,\ldots$ $\MC_{z,\bar{z}}^{(n)}$ is a probability
distribution on the set $\Y_n$.

Borodin and Olshanski \cite{borodin1,borodin2,borodin3} considered
a $3$-parameter family of measures $\MC_{z,z',\xi}$ (called
$z$-measures) on the set of all Young diagrams $\Y$. This family
was constructed by mixing of $\MC_{z,z'}^{(n)}$, see Borodin and
Olshanski \cite{borodin1} for details. Under certain conditions on
parameters, $\MC_{z,z',\xi}$ is a probability measure on $\Y$.
 Borodin and
Olshanski \cite{borodin1} showed that $\MC_{z,z',\xi}$  defines a
determinantal point process, and that $\MC_{z,z',\xi}$ can be
studied on the same level  as the Gaussian Unitary Ensemble of
Random Matrix Theory (see, for example, Mehta \cite{mehta}, Deift
\cite{Deift1}, Forrester \cite{forrester0} as basic references on
ensembles of random matrices; for a review of determinantal point
processes see Soshnikov \cite{soshikov}).

As it follows from   Kerov \cite{kerov}, Borodin and Olshanski
\cite{BO} it is natural to consider a deformation
$\MC_{z,z',\theta,\xi}$ of $\MC_{z,z',\xi}$, where $\theta>0$ is
called the parameter of deformation (or the Jack parameter). As it
shown in Borodin and Olshanski \cite{BO}, $\MC_{z,z',\theta,\xi}$
are in many ways similar to log-gas (random-matrix) models with
arbitrary $\beta=2\theta$. In particular, if $\theta=2$ or
$\theta=1/2$ one expects that $\MC_{z,z',\theta,\xi}$ will lead to
Pfaffian point processes, similar to ensembles of Random Matrix
Theory of $\beta=4$ or $\beta=1$ symmetry types. Thus, it is a
natural problem to compute the correlation functions for
$z$-measures with the deformation parameters $\theta=2$ or $1/2$.

More explicitly, set
$$
\DC_2(\lambda)=\{\lambda_i-2i\}.
$$
That is, $\DC_2(\lambda)$ is an infinite subset of $\Z$
corresponding to the Young diagram $\lambda$, and containing
infinitely many integers. Let $X=(x_1,\ldots,x_n)$ be a subset of
$\Z$ consisting of $n$ distinct points, and define
$$
\varrho_{z,z',2,\xi}(X)=\MC_{z,z',2,\xi}\left(\{\lambda|X\subset\DC_2(\lambda)\}\right).
$$
The problem is to give an explicit formula for
$\varrho_{z,z',2,\xi}(X)$, i.e. to compute the correlation
function of $\MC_{z,z',2,\xi}$. Note that as soon as
$\MC_{z,z',\theta,\xi}$ and $\MC_{z,z',1/\theta,\xi}$ are related
by
$$
\MC_{z,z',\theta,\xi}(\lambda)=\MC_{-z/\theta,-z'/\theta,1/\theta,\xi}(\lambda'),
$$
it is enough to consider the $z$-measures with the deformation
parameter $\theta=2$. Defining $\varrho_{z,z',2,\xi}$ as above we
do not require positivity of $\MC_{z,z',2,\xi}$.
\subsection{Summary of results}
\subsubsection{A formula for the correlation function of
$\MC_{z,z-1,2,\xi}$} Theorem \ref{Theoremzmeasure2} gives an
explicit formula for the correlation function of
$\MC_{z,z-1,2,\xi}$. This result shows that the correlation
function of $\MC_{z,z-1,2,\xi}$ can be written as a Pfaffian of a
$2\times 2$ matrix valued kernel. The entries of this kernel are
expressible in terms of one  function, $S_{z,z-1,2,\xi}(x,y)$,
which admits a contour integral representation, see Theorem
\ref{Theoremzmeasure2}, a). Theorem 3.1, b) gives a formula for
$S_{z,z-1,2,\xi}(x,y)$ in terms of the Gauss hypergeometric
functions.

Thus, Theorem \ref{Theoremzmeasure2} provides a partial solution
of the problem stated in Section \ref{SectionFormulation}, and it
is the main result of the present paper.
\subsubsection{A contour
integral representation for correlation functions of Pfaffian
Schur measures} In fact, results of Theorem \ref{Theoremzmeasure2}
can be obtained from a more general algebraic fact. We consider
measures on partitions which can be understood as Pfaffian
analogues of Okounkov's Schur measures, and derive  contour
integral representations for correlation kernels in Theorem
\ref{MainTheorem}. Then Theorem \ref{Theoremzmeasure2} is obtained
by considering a specific specialization of the algebra of
symmetric functions. As it will be evident from the proof of
Theorem \ref{Theoremzmeasure2}, the correlation function of
$\MC_{z,z',2,\xi}$ in the case of arbitrary parameters $z,z'$
cannot be deduced from Theorem \ref{MainTheorem}, and a different
approach is needed.
\subsubsection{Correlation functions for the Plancherel measure
with the deformation parameter $\theta=2$} When both $z,z'$ go to
infinity, the $z$-measures with general parameter $\theta>0$ have
limits. These limits can be understood  as Plancherel measures
with general parameter $\theta>0$. Theorem
\ref{TheoremPlancherelMeasure2}, a) gives the contour integral
representation for the correlation kernel of the Plancherel
measure with the deformation parameter $\theta=2$, and Theorem
\ref{TheoremPlancherelMeasure2}, b) gives a formula for these
correlation kernels in terms of Bessel functions.
\subsection{Remarks on related works}
\subsubsection{}It is known that a number of important measures on partitions
lead to determinantal point processes. The most well-known example
of measures of this type is the poissonization of the Plancherel
measure, which was considered by many researches, see Baik, Deift
and Johansson \cite{BaikDeiftJohansson}, Borodin, Okounkov and
Olshanski \cite{BorodinOkounkovOlshanski}, the review articles by
Deift \cite{Deift}, and by van Moerbeke \cite{vanMoerbeke}. The
$z$-measures (with $\theta=1$)  is another family of measures
leading to a determinantal point process. The fact that
$z$-measures define a determinantal process was first proved by
Borodin and Olshanski \cite{borodin1}. The paper by Borodin and
Olshanski \cite{borodin1} contains an explicit computation of the
correlation kernel, and shows that the correlation kernel can be
expressed through Gauss hypergeometric functions. The $z$-measures
on partitions were also studied in Borodin and Olshanski
\cite{borodin2,borodin3}, Okounkov \cite{okounkov5}, Borodin,
Olshanski, and Strahov \cite{borodin4}. Note that for special
values of parameters $z$-measures turn into discrete orthogonal
polynomial ensembles. Such ensembles are related with different
exactly solvable probabilistic models, see Borodin and Olshanski
\cite{borodin2}, and the references therein.
\subsubsection{}
Okounkov \cite{okounkov1} introduced a family of measures on
partitions called the Schur measures. This family of measures
includes the $z$-measures (with $\theta=1$) and the Poissonized
Plancherel measure as special cases. Okounkov
\cite{okounkov2,okounkov3} discusses different applications of
such measures. For relations of the Schur measures  with the space
of free fermions see Okounkov \cite{okounkov1}, a recent review
article by Harnard and Orlov \cite{harnard}, and the references
therein.
\subsubsection{}
The Schur processes  introduced in Okounkov and Reshetikhin
\cite{okounkov4}, and the periodic Schur measures introduced in
Borodin \cite{BorodinPeriodic} are generalizations of the Schur
measures, which also define determinantal point processes. The
correlation kernels of the Schur measures, of the Schur processes
and of the periodic Schur processes have contour integral
representations. As it was demonstrated in Okounkov
\cite{okounkov2,okounkov3}, Okounkov and Reshetikhin
\cite{okounkov4}, Borodin \cite{BorodinPeriodic} the contour
integral representations for the correlation kernels have many
advantages, in particular, such representations are very
convenient for the asymptotic analysis.
\subsubsection{}
The class of measures considered in Section
\ref{SectionPfaffianSchur} of the present paper was first
introduced by Rains \cite{rains} in the context of symmetrized
increasing subsequence problems, and in certain sense these
measures  are Pfaffian analogues of Okounkov's Schur measures.
Paper by Rains \cite{rains} also contains the idea of computation
of the correlation kernels.  Borodin and Rains
\cite{borodinrains}, Vuleti$\acute{\mbox{c}}$ \cite{vuletic} have
considered different classes of measures (the Pfaffian Schur
processes in Borodin and Rains \cite{borodinrains}, and the
shifted Schur processes in Vuleti$\acute{c}$ \cite{vuletic}) whose
correlation functions are expressible as Pfaffians of matrix
valued kernels. It can be verified that the results of Theorem
\ref{MainTheorem} can be deduced from that of Theorem 3.3 in
Borodin and Rains \cite{borodinrains}. However, the author of this
paper was not able to find the explicit formula for
$S_{\Schur,2}(x,y)$ (see Theorem \ref{MainTheorem}) in the
literature. In the
present paper we give an independent proof of Theorem \ref{MainTheorem}.\\
\subsubsection{}
The correlation functions for measures, which are in many ways
similar to the Plancherel measures with the deformation parameters
$\theta=1/2, 2$, were previously computed by different authors, see
Ferrari \cite{ferrari}, Lemma 5.2, Sasamoto and Imamura
\cite{sasamoto}, and Forrester, Nagao, and Rains \cite{forrester},
Section 3. In particular, Ferarri \cite{ferrari} computes a
correlation function for the following measure on even-rows Young
diagrams
$$
\MC(\lambda)=e^{-\eta}\frac{\eta^n}{n!}\frac{\dim\lambda}{Z_n},
$$
where $Z_n$ is the number of standard Young tableaux with $2n$
entries and even rows. The measure just written above can be
understood as the Plancherel measure with the Jack parameter $1/2$.
The result of Ferrari can be obtained from that of Section 7 of the
present paper as follows. Theorem 7.1 gives a correlation kernel for
the Plansherel measure defined on  the point configurations $
\DC_2(\lambda)=\{\lambda_i-2i\} $. One can also define a correlation
function with point configurations $
\tilde{\DC}_2(\lambda)=\{\lambda_i-2i+1,\lambda_i-2i\}$, and this
correlation function can be determined from the formula in Theorem
7.1. By particle-hole involution this gives a correlation function
for point configurations of the form $\{-\lambda_i'+i-1\}_{i\geq
1}^{\infty}$.  Then the relation
$M^{(n)}_{\theta}(\lambda)=M^{(n)}_{\frac{1}{\theta}}(\lambda')$
enables to obtain the correlation function for the Plancherel
measure with the Jack parameter $\theta=1/2$.

\subsubsection{}Special cases of the $z$-measures with $\theta=2,1/2$
can be understood as discrete symplectic and orthogonal ensembles.
For the explanation of this relation, and for the results on
correlation functions of the corresponding discrete ensembles see
Borodin and Strahov
\cite{borodinstrahov1}.\\
\textbf{Acknowledgment.} I am  grateful to Alexei Borodin and to
Patrik Ferrari for
 their interest in this work, and for helpful discussions. I also
 very grateful to reviewers for many valuable comments.

\section{The $z$-measures with the general parameter $\theta>0$}
We use Macdonald \cite{macdonald} as a basic reference for the
notations related to integer partitions and to symmetric
functions. In particular, every decomposition
$$
\lambda=(\lambda_1,\lambda_2,\ldots,\lambda_l):\;
n=\lambda_1+\lambda_2+\ldots+\lambda_{l},
$$
where $\lambda_1\geq\lambda_2\geq\ldots\geq\lambda_l$ are positive
integers, is called an integer partition. We identify integer
partitions with the corresponding Young diagrams, and denote the
set of all Young diagrams by $\Y$.  The set of Young diagrams with
$n$ boxes  is denoted by $\Y_n$. Thus
$$
\Y=\bigcup\limits_{n=0}^{\infty}\Y_n.
$$
Following Borodin and Olshanski \cite{BO}, Section 1, let
$\MC_{z,z',\theta}^{(n)}$ be a complex measure on $\Y_n$ defined
by
\begin{equation}\label{EquationVer4zmeasuren}
\MC_{z,z',\theta}^{(n)}=\frac{n!(z)_{\lambda,\theta}(z')_{\lambda,\theta}}{(t)_nH(\lambda,\theta)H'(\lambda,\theta)},
\end{equation}
where $n=1,2,\ldots $, and where we use the following notation
\begin{itemize}
    \item $z,z'\in\C$ and $\theta>0$ are parameters, the parameter
    $t$ is defined by
    $$
    t=\frac{zz'}{\theta}.
    $$
    \item $(t)_n$ stands for the Pochhammer symbol,
    $$
    (t)_n=t(t+1)\ldots (t+n-1)=\frac{\Gamma(t+n)}{\Gamma(t)}.
    $$
    \item
    $(z)_{\lambda,\theta}$ is a multidimensional analogue of the
    Pochhammer symbol defined by
    $$
    (z)_{\lambda,\theta}=\prod\limits_{(i,j)\in\lambda}(z+(j-1)-(i-1)\theta)
    =\prod\limits_{i=1}^{l(\lambda)}(z-(i-1)\theta)_{\lambda_i}.
    $$
     Here $(i,j)\in\lambda$ stands for the box in the $i$th row
     and the $j$th column of the Young diagram $\lambda$, and we
     denote by $l(\lambda)$ the number of nonempty rows in the
     Young diagram $\lambda$.
    \item
    $$
    H(\lambda,\theta)=\prod\limits_{(i,j)\in\lambda}\left((\lambda_i-j)+(\lambda_j'-i)\theta+1\right),
   $$
   $$
     H'(\lambda,\theta)=\prod\limits_{(i,j)\in\lambda}\left((\lambda_i-j)+(\lambda_j'-i)\theta+\theta\right),
   $$
      where $\lambda'$ denotes the transposed diagram.
\end{itemize}
\begin{prop}\label{PropositionHH}
The following symmetry relations hold true
$$
H(\lambda,\theta)=\theta^{|\lambda|}H'(\lambda',\frac{1}{\theta}),\;\;(z)_{\lambda,\theta}
=(-\theta)^{|\lambda|}\left(-\frac{z}{\theta}\right)_{\lambda',\frac{1}{\theta}}.
$$
Here $|\lambda|$ stands for the number of boxes in the diagram
$\lambda$.
\end{prop}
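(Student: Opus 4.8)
The plan is to establish both identities cell-by-cell, exploiting the bijection between the cells of $\lambda$ and of its transpose $\lambda'$: a cell $(i,j)$ lies in $\lambda$ precisely when $(j,i)$ lies in $\lambda'$, and transposition interchanges rows with columns, so that $(\lambda')_i=\lambda'_i$ and $(\lambda')'_j=\lambda_j$ (recall $(\lambda')'=\lambda$). Because every factor in the definitions of $H$, $H'$, and $(z)_{\lambda,\theta}$ depends only on the coordinates of a single cell, through the quantities $\lambda_i$ and $\lambda'_j$, this bijection lets me match the two products factor by factor; afterwards the stated powers of $\theta$ and $-\theta$ will simply record the number of cells, $|\lambda|=|\lambda'|$.

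For the first identity I would start from the definition of $H'$ applied to the diagram $\lambda'$ with parameter $1/\theta$. Using $(\lambda')_i=\lambda'_i$ and $(\lambda')'_j=\lambda_j$, the factor attached to a cell $(i,j)\in\lambda'$ is $(\lambda'_i-j)+(\lambda_j-i)\theta^{-1}+\theta^{-1}$. Re-indexing by the transposition bijection, i.e.\ writing this cell as $(q,p)$ with $(p,q)\in\lambda$, the factor becomes $(\lambda'_q-p)+(\lambda_p-q)\theta^{-1}+\theta^{-1}$, from which I would pull out a $\theta^{-1}$ to obtain $\theta^{-1}\bigl[(\lambda_p-q)+(\lambda'_q-p)\theta+1\bigr]$. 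The bracket is exactly the factor of $H(\lambda,\theta)$ indexed by $(p,q)\in\lambda$. Collecting the $|\lambda|$ scalars $\theta^{-1}$ yields $H'(\lambda',1/\theta)=\theta^{-|\lambda|}H(\lambda,\theta)$, which rearranges to the claimed relation.

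For the second identity the argument is parallel. Expanding $(-z/\theta)_{\lambda',1/\theta}$ from the definition of the multidimensional Pochhammer symbol, the factor at a cell $(i,j)\in\lambda'$ is $-z/\theta+(j-1)-(i-1)\theta^{-1}$. Passing to $(p,q)\in\lambda$ via the transposition and factoring out $-\theta^{-1}$ turns this into $-\theta^{-1}\bigl[z+(q-1)-(p-1)\theta\bigr]$, whose bracket is precisely the factor of $(z)_{\lambda,\theta}$ at $(p,q)$. Since there are $|\lambda|$ cells, this gives $(-z/\theta)_{\lambda',1/\theta}=(-\theta^{-1})^{|\lambda|}(z)_{\lambda,\theta}$, equivalently the second asserted identity $(z)_{\lambda,\theta}=(-\theta)^{|\lambda|}(-z/\theta)_{\lambda',1/\theta}$.

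I do not expect a genuine obstacle here: the content is entirely a matter of bookkeeping the indices correctly under transposition. The one place demanding care is the orientation of the bijection, namely confirming that the pair $(\lambda_i-j,\ \lambda'_j-i)$ governing the $H,H'$-factors has its two components swapped and its $\theta$-weight inverted under $\lambda\mapsto\lambda'$, and likewise that the content $(j-1)-(i-1)$ in the Pochhammer factor changes sign. Once these substitutions are carried out consistently, the factor-by-factor identification together with the cell count $|\lambda|=|\lambda'|$ finishes both relations.
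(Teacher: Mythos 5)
Your argument is correct: the factor-by-factor matching under the transposition bijection $(i,j)\in\lambda\leftrightarrow(j,i)\in\lambda'$, with the scalars $\theta^{-1}$ and $-\theta^{-1}$ extracted from each factor, verifies both identities exactly as stated. This is precisely the content of the paper's own (one-line) proof, which simply asserts that the relations follow immediately from the definitions of $H(\lambda,\theta)$ and $(z)_{\lambda,\theta}$; you have merely written out the bookkeeping in full.
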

\begin{proof}
These relations follow immediately from definitions of
$H(\lambda,\theta)$ and $(z)_{\lambda,\theta}$.
\end{proof}
\begin{prop}\label{PropositionMSymmetries}
We have
$$
\MC_{z,z',\theta}^{(n)}(\lambda)=\MC_{-z/\theta,-z'/\theta,1/\theta}^{(n)}(\lambda').
$$
\end{prop}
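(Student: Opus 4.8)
The plan is to substitute the two symmetry relations of Proposition \ref{PropositionHH} directly into the definition \eqref{EquationVer4zmeasuren} and verify that all the extra powers of $\theta$ cancel. Writing $\tilde z=-z/\theta$, $\tilde z'=-z'/\theta$, $\tilde\theta=1/\theta$ for the transformed parameters, the first thing I would check is that the Pochhammer prefactor $(t)_n$ is unchanged. The transformed parameter $\tilde t=\tilde z\tilde z'/\tilde\theta$ equals $(-z/\theta)(-z'/\theta)\theta=zz'/\theta=t$, so $(\tilde t)_n=(t)_n$. Since $|\lambda'|=|\lambda|=n$, the factorial $n!$ is common to both sides as well, and the problem reduces to matching the remaining four factors.

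For the numerator, the second relation of Proposition \ref{PropositionHH} gives $(\tilde z)_{\lambda',\tilde\theta}=(-z/\theta)_{\lambda',1/\theta}=(-\theta)^{-|\lambda|}(z)_{\lambda,\theta}$, and the same identity applied to $z'$ gives $(\tilde z')_{\lambda',\tilde\theta}=(-\theta)^{-|\lambda|}(z')_{\lambda,\theta}$. Multiplying these, the signs square away, so I expect
$$
(\tilde z)_{\lambda',\tilde\theta}\,(\tilde z')_{\lambda',\tilde\theta}=\theta^{-2|\lambda|}(z)_{\lambda,\theta}(z')_{\lambda,\theta}.
$$
For the denominator I would use the first relation twice: once as stated to obtain $H'(\lambda',1/\theta)=\theta^{-|\lambda|}H(\lambda,\theta)$, and once with $\lambda\mapsto\lambda'$ and $\theta\mapsto1/\theta$ (using $(\lambda')'=\lambda$ and $|\lambda'|=|\lambda|$) to obtain $H(\lambda',1/\theta)=\theta^{-|\lambda|}H'(\lambda,\theta)$. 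Their product is $\theta^{-2|\lambda|}H(\lambda,\theta)H'(\lambda,\theta)$.

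Finally I would assemble these pieces in the expression for $\MC_{\tilde z,\tilde z',\tilde\theta}^{(n)}(\lambda')$: the factor $\theta^{-2|\lambda|}$ occurs identically in numerator and denominator and cancels, leaving exactly $\MC_{z,z',\theta}^{(n)}(\lambda)$. There is no genuine obstacle beyond bookkeeping; the only thing requiring care is tracking the exponents and signs of $\theta$ so that the numerator and denominator contributions balance and cancel cleanly, together with confirming at the outset that $\tilde t=t$ so that the $(t)_n$ factor does not interfere.
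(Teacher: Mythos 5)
Your proposal is correct and follows exactly the route the paper takes: substitute the two symmetry relations of Proposition \ref{PropositionHH} into definition (\ref{EquationVer4zmeasuren}) and check that the powers of $\theta$ (and the invariance of $t$, hence of $(t)_n$) make everything cancel. The paper's proof is a one-line instruction to do this; your version simply carries out the bookkeeping, and all the exponents and signs check out.
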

\begin{proof}
Use definition of $\MC_{z,z',\theta}^{(n)}(\lambda)$, equation
(\ref{EquationVer4zmeasuren}), and apply Proposition
\ref{PropositionHH}.
\end{proof}
\begin{prop}\label{Prop1.3}
We have
$$
\sum\limits_{\lambda\in\Y_n}\MC_{z,z',\theta}^{(n)}(\lambda)=1.
$$
\end{prop}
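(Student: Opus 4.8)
The plan is to prove the equivalent normalization identity obtained by clearing the constant $\frac{n!}{(t)_n}$ from \eqref{EquationVer4zmeasuren}, namely
$$
\sum_{\lambda\in\Y_n}\frac{(z)_{\lambda,\theta}(z')_{\lambda,\theta}}{H(\lambda,\theta)H'(\lambda,\theta)}=\frac{(t)_n}{n!},\qquad t=\frac{zz'}{\theta},
$$
and to do this for all $n$ simultaneously by computing the generating function $\sum_{\lambda\in\Y}\frac{(z)_{\lambda,\theta}(z')_{\lambda,\theta}}{H(\lambda,\theta)H'(\lambda,\theta)}\,\xi^{|\lambda|}$ and identifying it with $(1-\xi)^{-t}$. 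Extracting the coefficient of $\xi^n$ and using the binomial expansion $(1-\xi)^{-t}=\sum_{n\ge0}\frac{(t)_n}{n!}\xi^n$ then gives the claim.

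The main tool is the theory of Jack symmetric functions with parameter $\alpha=1/\theta$ (see \cite{macdonald}, Ch.~VI). First I would identify the hook products appearing in \eqref{EquationVer4zmeasuren}: writing $a(s)=\lambda_i-j$ and $l(s)=\lambda_j'-i$ for the arm and leg of a box $s=(i,j)\in\lambda$, one checks directly from the definitions that $H(\lambda,\theta)=\theta^{|\lambda|}c_\lambda'(1/\theta)$ and $H'(\lambda,\theta)=\theta^{|\lambda|}c_\lambda(1/\theta)$, where $c_\lambda(\alpha)=\prod_s(\alpha a(s)+l(s)+1)$ and $c_\lambda'(\alpha)=\prod_s(\alpha a(s)+l(s)+\alpha)$. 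In particular the Jack scalar-product constant is $b_\lambda(1/\theta)=c_\lambda(1/\theta)/c_\lambda'(1/\theta)=H'(\lambda,\theta)/H(\lambda,\theta)$.

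Next I would invoke the principal specialization of Jack polynomials. Evaluating $P_\lambda^{(1/\theta)}$ at $N$ variables equal to $1$ gives $P_\lambda^{(1/\theta)}(1^N)=\prod_s\bigl(N+\tfrac{j-1}{\theta}-(i-1)\bigr)\big/c_\lambda(1/\theta)=\frac{(\theta N)_{\lambda,\theta}}{H'(\lambda,\theta)}$, the last step after pulling $\theta^{-|\lambda|}$ out of numerator and denominator. Since both sides are polynomials in $N$, this extends to the algebra homomorphism $\sigma_w\colon p_k\mapsto w/\theta$ $(k\ge1)$, yielding $P_\lambda^{(1/\theta)}(\sigma_w)=\frac{(w)_{\lambda,\theta}}{H'(\lambda,\theta)}$ for all complex $w$. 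Combining this with the formula for $b_\lambda$ shows that the summand is exactly $b_\lambda(1/\theta)\,P_\lambda^{(1/\theta)}(\sigma_z)\,P_\lambda^{(1/\theta)}(\sigma_{z'})=\frac{(z)_{\lambda,\theta}(z')_{\lambda,\theta}}{H(\lambda,\theta)H'(\lambda,\theta)}$.

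Finally I would feed these two specializations into the Jack--Cauchy identity $\sum_\lambda b_\lambda(\alpha)P_\lambda^{(\alpha)}(x)P_\lambda^{(\alpha)}(y)=\prod_{i,j}(1-x_iy_j)^{-1/\alpha}$, introducing $\xi$ through the rescaling $x_i\mapsto\xi x_i$ so that the degree-$n$ part carries $\xi^n$. Writing the right-hand side as $\exp\bigl(\tfrac1\alpha\sum_{k\ge1}\tfrac1k p_k(x)p_k(y)\bigr)$ and substituting $p_k(x)\mapsto\xi^k z/\theta$, $p_k(y)\mapsto z'/\theta$, with $1/\alpha=\theta$, the exponent collapses to $\frac{zz'}{\theta}\sum_{k\ge1}\frac{\xi^k}{k}=-t\log(1-\xi)$, so the sum equals $(1-\xi)^{-t}$ as required. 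The step demanding the most care is precisely this bookkeeping: getting the powers of $\theta$ right in the principal-specialization formula and in the Cauchy kernel so that the exponent comes out as $t=zz'/\theta$ rather than $\theta zz'$. This is where the identification $\alpha=1/\theta$ and the factor $\theta^{|\lambda|}$ relating $H,H'$ to $c_\lambda',c_\lambda$ must be tracked consistently; everything else is routine.
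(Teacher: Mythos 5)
Your proof is correct, but it cannot be compared line-by-line with the paper's, because the paper gives no proof at all: Proposition \ref{Prop1.3} is disposed of with a bare citation to Kerov and to Borodin--Olshanski. What you have written is, in substance, the standard argument from those references, carried out in full. I checked the bookkeeping you flagged as delicate and it is consistent: with $c_\lambda(\alpha)=\prod_s(\alpha a(s)+l(s)+1)$ and $c'_\lambda(\alpha)=\prod_s(\alpha a(s)+l(s)+\alpha)$ one indeed has $H(\lambda,\theta)=\theta^{|\lambda|}c'_\lambda(1/\theta)$ and $H'(\lambda,\theta)=\theta^{|\lambda|}c_\lambda(1/\theta)$, so $b_\lambda(1/\theta)=H'/H$; the principal specialization gives $P^{(1/\theta)}_\lambda(1^N)=(\theta N)_{\lambda,\theta}/H'(\lambda,\theta)$ after the powers of $\theta$ cancel; and the exponent in the Cauchy kernel under $p_k(x)\mapsto\xi^k z/\theta$, $p_k(y)\mapsto z'/\theta$ collapses to $-\frac{zz'}{\theta}\log(1-\xi)$, giving $(1-\xi)^{-t}$ with $t=zz'/\theta$ exactly as the paper's generating-function identity (the unnamed analogue of equation (\ref{EquationMzztheta})) requires. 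The one step worth stating explicitly rather than in passing is the extension from $w=\theta N$, $N\in\N$, to arbitrary complex $w$: both sides of $P^{(1/\theta)}_\lambda(\sigma_w)=(w)_{\lambda,\theta}/H'(\lambda,\theta)$ are polynomials in $w$ agreeing at infinitely many points, so the identity propagates --- you say this, and it is the right justification. What your write-up buys over the paper is self-containedness; what it costs is nothing, since the degree-$n$ coefficient extraction is purely algebraic and needs no convergence discussion.
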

\begin{proof}
See Kerov \cite{kerov}, Borodin and Olshanski
\cite{BO,BOHARMONICFUNCTIONS}.
\end{proof}
\begin{prop}
Expression (\ref{EquationVer4zmeasuren}) for
$\MC_{z,z',\theta}^{(n)}(\lambda)$ is strictly positive for all
$n=1,2,\ldots $ and all $\lambda\in\Y_n$ if and only if either
$z\in\C\setminus(\Z_{\leq 0}+\Zp\theta)$ and $z'=\bar z$, or, under
the additional assumption that $\theta$ is  rational, both $z, z'$
are real numbers lying in one of the intervals between two
consecutive numbers from the lattice $\Z+\Z\theta$.
\end{prop}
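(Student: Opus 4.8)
The plan is to reduce the entire question to a sign analysis, since the products $H(\lambda,\theta)$, $H'(\lambda,\theta)$ and $n!$ are manifestly strictly positive when $\theta>0$. Thus $\sgn\MC_{z,z',\theta}^{(n)}(\lambda)=\sgn\bigl[(z)_{\lambda,\theta}(z')_{\lambda,\theta}/(t)_n\bigr]$, and writing the numerator as a product over boxes,
\[
(z)_{\lambda,\theta}(z')_{\lambda,\theta}=\prod_{(i,j)\in\lambda} f\bigl((i-1)\theta-(j-1)\bigr),\qquad f(c):=(z-c)(z'-c),
\]
everything is governed by the sign of $f$ on the \emph{content set} $S:=\{(i-1)\theta-(j-1):i,j\ge1\}=\Zp\theta+\Z_{\le0}$ together with the sign of $(t)_n$, where $t=zz'/\theta$. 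A preliminary lemma I would record is that for rational $\theta=a/b$ in lowest terms one has $S=\tfrac1b\Z=\Z+\Z\theta$, while for irrational $\theta$ the set $S$ is dense in $\R$; this is an elementary Bézout computation, and it is exactly what makes the lattice $\Z+\Z\theta$ appear in the statement.

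For sufficiency I would argue directly. In the first regime $z'=\bar z$ gives $(z')_{\lambda,\theta}=\overline{(z)_{\lambda,\theta}}$, so the numerator is $|(z)_{\lambda,\theta}|^2\ge0$, nonzero precisely because $z\notin S=\Z_{\le0}+\Zp\theta$ forbids any factor from vanishing; moreover $t=|z|^2/\theta>0$, so $(t)_n>0$. In the second regime ($\theta$ rational, $z,z'$ in one open interval between consecutive points of $\Z+\Z\theta=S$), every content $c\in S$ lies strictly below $\min(z,z')$ or strictly above $\max(z,z')$, whence each $f(c)>0$ and the numerator is positive; since that interval cannot contain the lattice point $0$, the numbers $z,z'$ share a sign, $zz'>0$, and again $(t)_n>0$. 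In both cases $\MC_{z,z',\theta}^{(n)}(\lambda)>0$.

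For necessity, assume strict positivity for all $n,\lambda$, and suppose first that $z,z'$ are real. Since $n!,H,H'>0$, the sign of $(z)_{\lambda,\theta}(z')_{\lambda,\theta}$ equals $\sgn(t)_n$ and so depends only on $|\lambda|$; call it $\eps_n$. Adding a single box of content $c$ multiplies the numerator by $f(c)$, so $\sgn f(c)=\eps_{n-1}\eps_n$ for every $c$ that is addable to some diagram, and every $c\in S$ is of this form (realize $(i_0,j_0)$ as the addable corner of $((j_0)^{i_0-1},(j_0-1))$). Testing on one-row diagrams gives $\sgn f(-(k-1))=\sgn f(\theta)=\eps_{k-1}\eps_k$; since $f(-(k-1))=(z+k-1)(z'+k-1)>0$ for large $k$, this forces $\eps_{k-1}\eps_k\equiv+1$ and therefore $f(c)>0$ for all $c\in S$. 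Hence $S\cap[\min(z,z'),\max(z,z')]=\emptyset$: for irrational $\theta$ density forces $z=z'$ (the first regime), and for rational $\theta$ the identity $S=\Z+\Z\theta$ yields exactly the interval condition of the second regime, with the endpoints excluded from $S=\Z_{\le0}+\Zp\theta$.

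The genuinely delicate point, which I expect to be the main obstacle, is the complex case: showing that strict positivity forces $z+z'$ and $zz'$ to be real (equivalently $\{z,z'\}=\{\bar z,\bar z'\}$), which reduces everything to the real analysis above. Here I would use that the ratios
\[
\frac{\MC_{z,z',\theta}^{(n)}((n))}{\MC_{z,z',\theta}^{(n-1)}((n-1))}=\frac{(z+n-1)(z'+n-1)}{(t+n-1)(\theta+n-1)}
\]
are positive reals; clearing denominators and separating real and imaginary parts produces a relation affine in $n$ whose coefficients must vanish, giving either $\im(z+z')=\im(zz')=0$ or the alternative $(z-\theta)(z'-\theta)=0$. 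The latter places the content $\theta\in S$ among the roots of $f$, making the numerator $zz'(z-\theta)(z'-\theta)$ of $\MC_{z,z',\theta}^{(2)}((1,1))$ vanish and contradicting strict positivity. Thus the parameters are conjugation-invariant, and the non-real case ($z'=\bar z\notin\R$, so automatically $z\notin S$) lands in the first regime, completing the characterization.
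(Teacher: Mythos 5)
Your proof is correct in substance, but be aware that the paper itself gives no argument here: it simply cites Borodin and Olshanski \cite{BO} (Propositions 1.2 and 1.3 there), so what you have written is a genuinely self-contained alternative rather than a variant of the paper's proof. Your strategy --- reduce everything to the sign of $(z)_{\lambda,\theta}(z')_{\lambda,\theta}/(t)_n$, read off $\sgn f(c)$ for each content $c\in S=\Z_{\leq 0}+\Zp\theta$ from the ratio of the measures of $\lambda$ and of $\lambda$ with one box added, identify $S$ with $\Z+\Z\theta$ for rational $\theta$ and with a dense subset of $\R$ for irrational $\theta$, and dispose of non-real parameters by forcing $z+z'$ and $zz'$ to be real --- is sound, and the key computations check out: the ratio $\MC^{(n)}_{z,z',\theta}((n))/\MC^{(n-1)}_{z,z',\theta}((n-1))$ is indeed $(z+n-1)(z'+n-1)/\bigl((t+n-1)(\theta+n-1)\bigr)$, and the vanishing of the two coefficients in $m=n-1$ yields exactly the dichotomy $\im(z+z')=\im(zz')=0$ or $(z-\theta)(z'-\theta)=0$, the second branch being excluded by the vanishing of the numerator of $\MC^{(2)}_{z,z',\theta}((1,1))=$ const $\cdot\, zz'(z-\theta)(z'-\theta)$. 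Two spots deserve one more line each. First, the chain $\sgn f(-(k-1))=\sgn f(\theta)=\eps_{k-1}\eps_k$ requires observing that the box $(2,1)$, of content $\theta$, is addable to \emph{every} one-row diagram $(k-1)$ with $k\geq 2$; your parenthetical construction exhibits each content as addable at only a single level, which by itself would not propagate the large-$k$ positivity of $f(-(k-1))$ down to all $k$. Second, for irrational $\theta$ the content $c=0$ occurs only at level $1$, where the relation $\sgn f(0)=\eps_0\eps_1=\sgn t$ is a tautology; you should note that density of $S\setminus\{0\}$ already forces $z=z'$, whence $f(0)=z^2>0$, while for rational $\theta$ the box $(1+b,1+a)$ also has content $0$ at a higher level. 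With these two remarks the argument is complete.
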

\begin{proof}
See Borodin and Olshanski \cite{BO}.
\end{proof}
Clearly, if the conditions in the Proposition above  are satisfied,
then $\MC_{z,z',\theta}^{(n)}$ is a probability measure defined on
$\Y_n$, as  follows from Proposition \ref{Prop1.3}.

It is convenient to mix all measures $\MC_{z,z',\theta}^{(n)}$, and
to define a new measure $\MC_{z,z',\theta,\xi}$ on
$\Y=\Y_0\cup\Y_1\cup\ldots$. Namely, let $\xi\in(0,1)$ be an
additional parameter, and set
\begin{equation}\label{EquationMzztheta}
\MC_{z,z',\theta,\xi}(\lambda)=(1-\xi)^t\xi^{|\lambda|}
\frac{(z)_{\lambda,\theta}(z')_{\lambda,\theta}}{H(\lambda,\theta)H'(\lambda,\theta)}.
\end{equation}
We also note the relation
\begin{equation}\label{EquationRelationBetweenMeasures}
\MC_{z,z',\theta,\xi}(\lambda)=(1-\xi)^t\xi^n\frac{(t)_n}{n!}\MC_{z,z',\theta,\xi}^{(n)}(\lambda),\;\;|\lambda|=n.
\end{equation}
\begin{prop} We have
$$
\sum\limits_{\lambda\in\Y}\MC_{z,z',\theta,\xi}(\lambda)=1.
$$
\end{prop}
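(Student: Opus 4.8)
The plan is to reduce the sum over all of $\Y$ to a sum over the number of boxes and then recognize a binomial series. First I would split the sum according to the decomposition $\Y=\bigcup_{n=0}^\infty\Y_n$, writing
$$
\sum_{\lambda\in\Y}\MC_{z,z',\theta,\xi}(\lambda)=\sum_{n=0}^\infty\;\sum_{\lambda\in\Y_n}\MC_{z,z',\theta,\xi}(\lambda).
$$
For a fixed $n$, relation (\ref{EquationRelationBetweenMeasures}) expresses each term with $|\lambda|=n$ through the normalized measure on $\Y_n$, so that the factor $(1-\xi)^t\xi^n(t)_n/n!$ is independent of $\lambda$ and may be pulled out of the inner sum. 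This gives
$$
\sum_{\lambda\in\Y_n}\MC_{z,z',\theta,\xi}(\lambda)=(1-\xi)^t\xi^n\frac{(t)_n}{n!}\sum_{\lambda\in\Y_n}\MC_{z,z',\theta}^{(n)}(\lambda).
$$
By Proposition \ref{Prop1.3} the remaining inner sum equals $1$, so the inner double sum collapses to $(1-\xi)^t\xi^n(t)_n/n!$. Here the case $n=0$ is the empty diagram, which contributes $(1-\xi)^t$ directly from (\ref{EquationMzztheta}), consistent with $(t)_0/0!=1$.

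It then remains to evaluate
$$
\sum_{n=0}^\infty(1-\xi)^t\xi^n\frac{(t)_n}{n!}=(1-\xi)^t\sum_{n=0}^\infty\frac{(t)_n}{n!}\,\xi^n.
$$
The key step is to recognize the series $\sum_{n\geq0}(t)_n\xi^n/n!$ as Newton's generalized binomial series, whose sum is $(1-\xi)^{-t}$ for $|\xi|<1$. Substituting this identity yields $(1-\xi)^t\cdot(1-\xi)^{-t}=1$, which is the assertion.

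The only genuinely delicate point is the interchange of the two summations together with the convergence of the binomial series: since $t=zz'/\theta$ is in general complex, I would note that for $\xi\in(0,1)$ the series $\sum_n|(t)_n/n!|\,\xi^n$ converges, its terms being $O(n^{\re t-1}\xi^n)$ by the Stirling asymptotics $(t)_n/n!\sim n^{t-1}/\Gamma(t)$. This absolute convergence justifies both the regrouping of the sum by $|\lambda|$ and the application of the binomial identity. Everything else is bookkeeping, and the substance of the proof is the single binomial summation.
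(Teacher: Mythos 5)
Your proof is correct and follows the same route the paper intends: the paper's own proof is the one-line remark that the claim ``follows immediately from Proposition \ref{Prop1.3},'' and your argument simply spells out the implicit steps (grouping by $|\lambda|$, applying the normalization on each $\Y_n$, and summing the binomial series $\sum_n (t)_n\xi^n/n!=(1-\xi)^{-t}$). No gap; the convergence remark is a welcome extra.
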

\begin{proof}
Follows immediately from Proposition \ref{Prop1.3}.
\end{proof}
If conditions on $z,z'$ formulated in Propositions 1.2, 1.3 in
Borodin and Olshanski \cite{BO} are satisfied, then
$\MC_{z,z',\theta,\xi}(\lambda)$ is a probability measure on $\Y$.
We will refer to $\MC_{z,z',\theta,\xi}(\lambda)$ as to the
$z$-measure with the deformation (Jack) parameter $\theta$.

When both $z,z'$ go to infinity, expression
(\ref{EquationVer4zmeasuren}) has a limit
\begin{equation}\label{EquationPlancherelInfy}
\MC_{\infty,\infty,\theta}^{(n)}(\lambda)=\frac{n!\theta^{n}}{H(\lambda,\theta)H'(\lambda,\theta)}
\end{equation}
called the Plancherel measure on $\Y_n$ with general $\theta>0$.
Instead of (\ref{EquationPlancherelInfy}), sometimes it is more
convenient to consider the Poissonized Plancherel measure with
general $\theta>0$,
\begin{equation}\label{EquationPlancherelInfyMixed}
\MC_{\infty,\infty,\theta,\eta}(\lambda)=e^{-\eta^2}\left(\eta^2\right)^{|\lambda|}
\frac{\theta^{|\lambda|}}{H(\lambda,\theta)H'(\lambda,\theta)},
\end{equation}
where $\eta$ is a real parameter.
\section{Main result}
Set $ \DC_{2}(\lambda)=\left\{\lambda_i-2i\right\}.$ Thus
$\DC_{2}(\lambda)$ is an infinite subset of $\Z$ corresponding to
the Young diagram $\lambda$ containing infinitely many negative
integers. Let $X=(x_1,\ldots,x_n)$ be a subset of $\Z$ consisting
of $n$ pairwise distinct points, and define
$$
\varrho_{z,z',2,\xi}(X)=\MC_{z,z',2,\xi}\left(\{\lambda|X\subset\DC_2(\lambda)\}\right).
$$
If $\MC_{z,z',\theta,\xi}$ is positive definite, then it is a
probability measure defined on $\Y$, and
$\varrho_{z,z',\theta,\xi}(X)$ is the probability that the random
point configuration $\DC_{\theta}(\lambda)$ contains the fixed
$n$-point configuration $X=(x_1,\ldots,x_n)$. Our goal here is to
prove the following
\begin{thm}\label{Theoremzmeasure2}
a) We have
$$
\varrho_{z,z-1,2,\xi}(X)=\Pf\left[K_{z,z-1,2,\xi}(x_i,x_j)\right]_{i,j=1}^n,
$$
where the $2\times 2$ matrix valued   kernel,
$K_{z,z-1,2,\xi}(x,y)$, can be written as
$$
K_{z,z-1,2,\xi}(x,y)=\left[\begin{array}{cc}
  S_{z,z-1,2,\xi}(x+1,y+1) & -S_{z,z-1,2,\xi}(x+1,y) \\
  -S_{z,z-1,2,\xi}(x,y+1) & S_{z,z-1,2,\xi}(x,y) \\
\end{array}\right].
$$
The function $S_{z,z-1,2,\xi}(x,y)$ has the following contour
integral representation
\begin{equation}
\begin{split}
S_{z,z-1,2,\xi}(x,y)=\frac{1}{(2\pi
i)^2}\oint\limits_{\{w_1\}}&\oint\limits_{\{w_2\}}
\frac{(1+\sqrt{\xi}w_1)^{-z}(1+\sqrt{\xi}w_2)^{-z}
(1+\frac{\sqrt{\xi}}{w_1})^{z}(1+\frac{\sqrt{\xi}}{w_2})^{z}}{(w_2w_1-1)}\\
&\times\frac{(w_2-w_1)}{(w_2^2-1)(w_1^2-1)}\frac{dw_1dw_2}{w_1^{x}w_2^{y}},
\end{split}
\nonumber
\end{equation}
where $\{w_1\}$, $\{w_2\}$ are arbitrary simple contours
satisfying the conditions
\begin{itemize}
    \item both contours go around $0$ in positive direction;
    \item the unit circle is contained in the interior of the contour $\{w_1\}$, and in
    the interior of the contour $\{w_2\}$, so $|w_1|>1$,
    $|w_2|>1$.
    \item the point $\xi^{-1/2}$ lies outside both contours
    $\{w_1\}$ and $\{w_2\}$.
\end{itemize}
b) The function $S_{z,z-1,2,\xi}(x,y)$ also can be written as
\begin{equation}
\begin{split}
&S_{z,z-1,2,\xi}(x,y)=(1-\xi)^{2z}\sum\limits_{k,m=1}^{\infty}(-1)^{k+m+x+y}\xi^{\frac{k+m+x+y}{2}}\left(\Upsilon\right)_{k,m}
(z)_{k+x}(z)_{m+y}\\
&\frac{F(-z+1,-z;k+x+1;\frac{\xi}{\xi-1})}{\Gamma(k+x+1)}\frac{F(-z+1,-z;m+y+1;\frac{\xi}{\xi-1})}{\Gamma(m+y+1)},
\nonumber
\end{split}
\end{equation}
where $F(a,b;c;w)$ denotes the Gauss hypergeometric function, and
$\left(\Upsilon\right)_{k,m}$ are the matrix elements of the
matrix $\Upsilon$ defined by
\begin{equation}\label{EquationUpsilon}
\Upsilon=\left[\begin{array}{ccccccccc}
                 0 & -1 & 0 & -1 & 0 & -1 & 0 & -1 &\cdots  \\
                 1 & 0 & 0 & 0 & 0 & 0 & 0 & 0 & \cdots  \\
                 0 & 0 & 0 & -1 & 0 & -1 & 0 & -1 &  \cdots \\
                 1 & 0 & 1 & 0 & 0 & 0 & 0 & 0 & \cdots  \\
                 0 & 0 & 0 & 0 & 0 & -1 & 0 & -1 &  \cdots \\
                 1 & 0 & 1 & 0 & 1 & 0 & 0 & 0 & \cdots  \\
                 0 & 0 & 0 & 0 & 0 & 0 & 0 & -1 & \cdots  \\
                 1 & 0 & 1 & 0 & 1 & 0 & 1 & 0 & \cdots  \\
                 \vdots & \vdots & \vdots & \vdots & \vdots & \vdots & \vdots &
\vdots & \ddots
               \end{array}
\right].
\end{equation}
That is, $\Upsilon$ is an antisymmetric matrix whose entries are
defined by the relations
$$
\Upsilon(2i+1,2j+1)=\Upsilon(2i,2j)=0,\;\hbox{for any $i,j\geq
0$},
$$
$$
\Upsilon(2i+1,2j+2)=0, \;\hbox{for $0\leq i\leq j$},
$$
$$
\Upsilon(2i,2j+1)=-1, \;\hbox{for $0\leq i\leq j$}.
$$
\end{thm}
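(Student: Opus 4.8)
The plan is to deduce the theorem from the general result on Pfaffian Schur measures, Theorem \ref{MainTheorem}, by exhibiting $\MC_{z,z-1,2,\xi}$ as one of its specializations. First I would rewrite, for the Jack parameter $\theta=2$, the weight $\frac{(z)_{\la,2}(z-1)_{\la,2}}{H(\la,2)H'(\la,2)}$ in terms of a single specialization $\rho$ of the algebra of symmetric functions. The crucial point is the choice $z'=z-1$: using the product formula for $(z)_{\la,\theta}$ we have $(z)_{\la,2}=\prod_i (z-2(i-1))_{\la_i}$ and $(z-1)_{\la,2}=\prod_i (z-(2i-1))_{\la_i}$, so the two families of Pochhammer factors interleave into a single consecutive sequence of starting points $z,z-1,z-2,\dots$, grouped in pairs. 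This is precisely the shape produced by a Schur function attached to one ``Pochhammer'' specialization $\rho$ whose generating function is $\Phi(w)=(1+\sqrt{\xi}\,w)^{-z}(1+\sqrt{\xi}/w)^{z}$. It is also why the argument is confined to $z'=z-1$: for general $z'$ the two halves do not merge into one specialization, the single-specialization hypothesis underlying Theorem \ref{MainTheorem} fails, and a different method is required, as remarked in the introduction.

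For part a) I would substitute this specialization into the contour-integral formula for $S_{\Schur,2}$ provided by Theorem \ref{MainTheorem}. The generating function of $\rho$ enters the integrand through the factors $\Phi(w_1)$ and $\Phi(w_2)$, while the universal coupling factor $\frac{1}{w_1w_2-1}\cdot\frac{w_2-w_1}{(w_1^2-1)(w_2^2-1)}$ and the monomials $w_1^{-x}w_2^{-y}$ are inherited verbatim from the general kernel. The contour conditions in Theorem \ref{MainTheorem} are exactly those under which $\Phi$ is single-valued and the geometric expansions used there converge; I would check that they specialize to the three bullet conditions of the statement, namely that both curves enclose $0$ and the unit circle while the branch point $\xi^{-1/2}$ of $\Phi$ is left outside. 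Finally, the $2\times 2$ block form, with its four shifted copies $S_{z,z-1,2,\xi}(x\pm 0,y\pm 0)$, is the standard packaging of the Pfaffian point process on $\DC_2(\la)=\{\la_i-2i\}$ and is carried over directly from Theorem \ref{MainTheorem}.

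For part b) I would evaluate the double integral by Laurent expansion in the region $|w_1|,|w_2|>1$. Setting $u=1/w_1$ and $v=1/w_2$, the coupling factor becomes $\frac{(u-v)u^2v^2}{(1-uv)(1-u^2)(1-v^2)}$; expanding the three geometric series and reading off the coefficient of $u^{a}v^{b}$ produces an array with entries in $\{0,\pm1\}$, antisymmetric because of the prefactor $(u-v)$, with the parity and triangularity constraints coming from the even exponents in $(1-u^2)^{-1},(1-v^2)^{-1}$ and from $\sum_n(uv)^n$. This array is exactly the matrix $\Upsilon$ of (\ref{EquationUpsilon}). The factors $\Phi(w_1),\Phi(w_2)$ contribute their Laurent coefficients; the coefficient of a fixed power of $w$ in $(1+\sqrt{\xi}\,w)^{-z}(1+\sqrt{\xi}/w)^{z}$ is a convolution of two binomial series, hence a Gauss hypergeometric sum, and a Pfaff transformation moves its argument to $\frac{\xi}{\xi-1}$, produces the function $F(-z+1,-z;\cdot;\cdot)$, the prefactor $(1-\xi)^{2z}$, and the Pochhammer and $\Gamma$ factors of the stated series. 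Assembling the contributions of $\Phi$ and of $\Upsilon$ yields the double sum.

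The main obstacle is the identification carried out in the first paragraph: pinning down the specialization $\rho$ and proving rigorously that $\MC_{z,z-1,2,\xi}$ coincides with the associated Pfaffian Schur measure, including the matching of the normalization $(1-\xi)^{t}$ and the correct handling of the $\theta=2$ (zonal) Jack structure under the interleaving. Once the measure is placed inside the framework of Theorem \ref{MainTheorem}, part a) is essentially a substitution, and the remaining difficulty in part b) is purely the bookkeeping of residues needed to reproduce the exact $0/\pm1$ pattern of $\Upsilon$ and to recognize the binomial convolutions as the specific Gauss hypergeometric functions after the Pfaff transformation.
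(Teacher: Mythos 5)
Your route is the same as the paper's: exhibit $\MC_{z,z-1,2,\xi}$ as the Pfaffian Schur measure $\MC_{\Schur,2}$ for a suitable specialization, then read off part a) from the contour integral of Theorem \ref{MainTheorem} and part b) from the sum representation (\ref{EquationSSchur2asSUM}) after evaluating the one-variable Laurent coefficients as Gauss hypergeometric functions via a Pfaff/Euler transformation. Parts a) and b), as you describe them, track the paper's Section 6 closely, and your identification of $z'=z-1$ as the condition making the two multidimensional Pochhammer symbols interleave into a single specialization is exactly the right structural reason the argument is confined to this case.

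The genuine gap is in the step you defer to: recognizing $\MC_{z,z-1,2,\xi}$ as an instance of (\ref{DefinitionPfaffianSchurMeasure}) is not a consequence of the Pochhammer interleaving alone. The definition of $\MC_{\Schur,2}$ is built on the $2l(\lambda)\times l(\lambda)$ two-column determinant $\det\bigl[\pi\{e_{\lambda_j-2j+i+1}\},\pi\{e_{\lambda_j-2j+i}\}\bigr]$, so before any specialization can be chosen one needs the identity
$\frac{1}{H(\lambda,2)H'(\lambda,2)}=\det\bigl[\frac{1}{(\lambda_j-2j+i+1)!},\frac{1}{(\lambda_j-2j+i)!}\bigr]$,
which is a nontrivial fact about $\theta=2$ hook products (the paper imports it from Lemma 3.5 of Borodin--Olshanski as Proposition \ref{Proposition6.1}). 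Your proposal never states this identity, and without it the measure cannot even be put into the required determinantal shape. The second half of the identification --- absorbing $\xi^{|\lambda|}(z)_{\lambda,2}(z-1)_{\lambda,2}$ into the rows of that determinant to produce $\pi_{-z}\{e_k\}=\xi^{k/2}\binom{-z}{k}$ --- is also only sketched; the paper carries it out by computing the determinant explicitly for integer $z=N$ (where the entries become binomial coefficients) and then invoking analytic continuation in $z$, and some such argument is needed to make your ``interleaving'' heuristic rigorous. A small additional imprecision: $(1+\sqrt{\xi}\,w)^{-z}(1+\sqrt{\xi}/w)^{z}$ is not the generating function of the specialization (which is $\pi_{-z}\{E(w)\}=(1+\sqrt{\xi}\,w)^{-z}$, a series in nonnegative powers of $w$) but the ratio $\pi_{-z}\{E(w)\}/\pi_{-z}\{E(w^{-1})\}$ appearing in the kernel; this matters because the contour conditions, in particular keeping $\xi^{-1/2}$ outside the contours, come from the denominator $\pi_{-z}\{E(w^{-1})\}$ rather than from a branch point of a single generating series.
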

\begin{rem}
1) Since $\xi\in (0,1)$, we have $\xi/(\xi-1)<0$, and the function
$w\rightarrow F(a,b;c;w)$  is well defined on the negative
semi-axis $w<0$. The function $F(a,b;c;w)$ is not defined at
$c=0,-1,-2,\ldots ,$ but the ratio $F(a,b;c;w)/\Gamma(c)$ is well
defined for all $c\in\C$.\\
2) The matrix $\Upsilon$ appears in the study of the random matrix
ensembles of $\beta=1,4$ symmetry classes, and of their discrete
analogues, see Borodin and Strahov \cite{borodinstrahov}, Section
6, and also Borodin and Strahov \cite{borodinstrahov1}, the proof
of Lemma 6.1.
\end{rem}
\section{Pfaffian Schur measures}\label{SectionPfaffianSchur}
In the case when $z'=z-1$ the measure $\MC_{z,z-1,2,\xi}$ can be
written as a determinant, see Proposition \ref{Proposition6.1}. This
will enable us to obtain Theorem \ref{Theoremzmeasure2} as a
corollary of a more general algebraic fact.

Let $\Lambda$  denote the algebra of symmetric functions. The
algebra $\Lambda$ can be considered as the algebra of polynomials
$\C[p_1,p_2,\ldots]$ in power sums $p_1,p_2,\ldots $. Then it can
be realized, in different ways, as an algebra of functions,
depending on a specialization of the generators $p_k$. The
elements $h_k$ and $e_k$ (the complete homogeneous symmetric
functions and the elementary symmetric functions) can be
introduced through the generating series:
$$
1+\sum\limits_{k=1}^{\infty}h_ku^k=\exp\left(\sum\limits_{k=1}^{\infty}p_k\frac{u^k}{k}\right)
=\left(1+\sum\limits_{k=1}^{\infty}e_k(-u)^k\right)^{-1}.
$$
We define the generating series for $\{h_k\}$ and $\{e_k\}$ as
formal series in a complex variable $u$ by
$$
H(u)=1+\sum\limits_{k=1}^{\infty}h_ku^k,\;\;
E(u)=1+\sum\limits_{k=1}^{\infty}e_ku^k.
$$
The Schur function $s_{\lambda}$ indexed by a Young diagram
$\lambda$ can be introduced through the Jacobi-Trudi formula:
$$
s_{\lambda}=\det[h_{\lambda_i-i+j}],
$$
where, by convention, $h_0=1$, $h_{-1}=h_{-2}=\ldots=0$, and the
order of the determinant is any number greater or equal to
$l(\lambda)$.

By a specialization $\pi$ of the algebra of symmetric function
$\Lambda$ we mean a homomorphism to $\C$.

Let $\pi$  be an arbitrary specialization  of the algebra
$\Lambda$ of symmetric functions. Introduce the complex measure
\begin{equation}\label{DefinitionPfaffianSchurMeasure}
\MC_{\Schur,2}(\lambda)=\frac{1}{\ZC_{\Schur,2}}\det\left[\pi\{e_{\lambda_j-2j+i+1}\},\pi\{e_{\lambda_j-2j+i}\}\right],
\end{equation}
where $1\leq j\leq l(\lambda)$, $1\leq i\leq 2l(\lambda)$, and
$$
\ZC_{\Schur,2}=\sum\limits_{\lambda\in\Y}\det\left[\pi\{e_{\lambda_j-2j+i+1}\},\pi\{e_{\lambda_j-2j+i}\}\right]
$$ is assumed to be absolutely convergent.
Now we prove the following
\begin{thm}\label{MainTheorem}
a) For any  measure $\mathcal{M}_{\Schur, 2}$ on the set $\Y$ of
all Young diagrams, and  for any fixed subset
$X=\{x_1,\ldots,x_n\}$ of $\Z$ containing pairwise distinct points
we have the following formal series identity
\begin{equation}\label{EquationMainTheorem1}
\begin{split}
\sum\limits_{X\subset\DC_2(\lambda)}\det &\left[\pi\{e_{\lambda_j-2j+i+1}\},\pi\{e_{\lambda_j-2j+i}\}\right]\\
&=\ZC_{\Schur,2}\Pf\left[K_{\Schur,2}(x_i,x_j)\right]_{i,j=1}^n.
\end{split}
\end{equation}
Here the kernel $K_{\Schur,2}(x,y)$ can be written as
\begin{equation}\label{EquationMainTheorem2}
K_{\Schur,2}(x,y)=\left[\begin{array}{cc}
  S_{\Schur,2}(x+1,y+1) & -S_{\Schur,2}(x+1,y) \\
  -S_{\Schur,2}(x,y+1) & S_{\Schur,2}(x,y) \\
\end{array}\right],
\end{equation}
and the function $S_{\Schur,2}(x,y)$ admits the following integral
representation
\begin{equation}
\begin{split}
&S_{\Schur,2}(x,y)\\
&=\frac{1}{(2\pi i)^2}\oint\limits_{\{w_1\}}\oint\limits_{\{w_2\}}
\frac{dw_1dw_2}{w_1^{x}w_2^{y}}\frac{\pi\left\{E(w_1)\right\}}{\pi\left\{E(w_1^{-1})\right\}}
\frac{\pi\left\{E(w_2)\right\}}{\pi\left\{E(w_2^{-1})\right\}}\frac{(w_2-w_1)}{(w_2w_1-1)(w_2^2-1)(w_1^2-1)},
\end{split}
\nonumber
\end{equation}
where
$\pi\left\{E(w)\right\}=1+\sum\limits_{k=1}^{\infty}\pi\{e_k\}w^k$,
and where $\{w_1\}$, $\{w_2\}$ are simple contours which both go
around $0$ in positive direction, they  are chosen in such a way
that they do not include  the possible poles at
$\pi\left\{E(w_1^{-1})\right\}, \pi\left\{E(w_2^{-1})\right\}$, and
$|w_1|>1$, $|w_2|>1$.\\
b) The function $S_{\Schur,2}(x,y)$ also admits the representation
\begin{equation}\label{EquationSSchur2asSUM}
S_{\Schur,2}(x,y)=\sum\limits_{k,m=1}^{\infty}\Phi_k(x)\left(\Upsilon\right)_{k,m}\Phi_m(y),
\end{equation}
where the functions $\Phi_k$ are given by
\begin{equation}\label{Equation3.9}
\Phi_k(x)=\frac{1}{2\pi
i}\oint\limits_{\{w\}}\frac{\pi\{E(w)\}}{\pi\{E(\frac{1}{w})\}}\frac{dw}{w^{k+x+1}}.
\end{equation}
In the formulae just written above $\{w\}$ is a simple contour
which goes around $0$ in positive direction, and
$\left(\Upsilon\right)_{k,m}$ are the matrix elements of the
matrix $\Upsilon$ defined by equation (\ref{EquationUpsilon}).
\end{thm}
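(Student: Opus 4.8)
The plan is to prove the two parts together, linking them through a single generating-function identity for the matrix $\Upsilon$. First I would observe that, since $|w_1|,|w_2|>1$ on the chosen contours, the rational factor in the integrand of $S_{\Schur,2}$ in part a) expands as a convergent double power series in $w_1^{-1},w_2^{-1}$. Thus part a) follows from part b) once one verifies the identity
\[
\frac{w_2-w_1}{(w_2w_1-1)(w_2^2-1)(w_1^2-1)}=\sum_{k,m\ge 1}\left(\Upsilon\right)_{k,m}\,w_1^{-k-1}w_2^{-m-1},
\]
which I would check by expanding $(w_i^2-1)^{-1}$ and $(w_2w_1-1)^{-1}$ geometrically and matching coefficients against the explicit entries of $\Upsilon$ in \eqref{EquationUpsilon}; the bookkeeping is routine once the even/odd index parity is tracked (the leading coefficients $\Upsilon_{1,2}=-1$, $\Upsilon_{2,1}=1$ already reproduce $w_1^{-3}w_2^{-2}-w_1^{-2}w_2^{-3}$). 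Substituting the contour representation \eqref{Equation3.9} of $\Phi_k$ into \eqref{EquationSSchur2asSUM} and resumming with this identity then produces the double integral of part a), with the factors $\pi\{E(w_i)\}/\pi\{E(w_i^{-1})\}$ carried along unchanged. So the whole theorem reduces to establishing the Pfaffian form \eqref{EquationMainTheorem1}--\eqref{EquationMainTheorem2} together with the series representation \eqref{EquationSSchur2asSUM}.

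Next I would establish the Pfaffian structure. The starting point is to rewrite each entry of the defining determinant \eqref{DefinitionPfaffianSchurMeasure} through $\pi\{e_k\}=\frac{1}{2\pi i}\oint \pi\{E(w)\}\,w^{-k-1}dw$ and to use multilinearity in the $2l(\lambda)$ columns, assigning an independent contour variable to each column. The residual column-determinant is then a Vandermonde in the reciprocals of these variables, and the dependence on $\lambda$ enters only through the monomials $w^{-(\lambda_j-2j)}$. The essential point is that the pairing of the two columns attached to each index $j$ (the $\epsilon=0,1$ shift) turns the result into an antisymmetric, specialization-independent bilinear form in the single-particle data; this is the source of the $2\times 2$ block kernel $K_{\Schur,2}$, its two rows and columns corresponding to the shift by $1$ visible in \eqref{EquationMainTheorem2}. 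I would then show that $W(\lambda):=\det[\dots]$ equals a Pfaffian $\Pf[A(d_i,d_j)]$ over the configuration $\DC_2(\lambda)=\{\lambda_i-2i\}$ for an antisymmetric kernel $A$, suitably regularized to account for the even tail $\{-2i:\ i>l(\lambda)\}$.

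With the weight in Pfaffian form, the sum $\sum_{X\subset\DC_2(\lambda)}W(\lambda)$ is handled by the standard inversion for Pfaffian point processes: the correlation function is the Pfaffian of the $2\times 2$ matrix kernel obtained from $A$ by inverting it on a reference configuration. Carrying this out produces the series \eqref{EquationSSchur2asSUM}, in which the specialization-dependent single-particle functions are exactly the $\Phi_k$ of \eqref{Equation3.9}---the ratio $\pi\{E(w)\}/\pi\{E(1/w)\}$ appearing because the inversion couples the specialization to its reflection $w\mapsto 1/w$, which is the hallmark of the $\beta=1,4$ (orthogonal/symplectic) structure---while the remaining universal combinatorial factor is the inverse Gram matrix $\Upsilon$. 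I expect the computation of $\Upsilon$ to be the main obstacle: one must solve the linear system defining this inverse and identify the closed pattern \eqref{EquationUpsilon}. Here the $2$-spacing of the tail $\{-2i\}$, as opposed to the $1$-spacing in the $\theta=1$ Schur case, is decisive, since it forces the banded, parity-dependent shape with entries $0,\pm1$; I would exploit the resulting triangular structure to invert by inspection and verify the three relations $\Upsilon(2i+1,2j+1)=\Upsilon(2i,2j)=0$, $\Upsilon(2i+1,2j+2)=0$, and $\Upsilon(2i,2j+1)=-1$.

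Finally, I would assemble the contour integral of part a) by substituting \eqref{Equation3.9} into \eqref{EquationSSchur2asSUM} and applying the generating-function identity from the first step, checking throughout that the contours may be taken with $|w_1|,|w_2|>1$ while avoiding the poles of $\pi\{E(w_i^{-1})\}$, so that every interchange of summation and integration is justified. The two ingredients I would treat most carefully are the passage from the column-paired determinant to a genuine Pfaffian over $\DC_2(\lambda)$, including the regularization of the infinite even tail, and the explicit inversion producing $\Upsilon$; everything else is formal manipulation of generating series.
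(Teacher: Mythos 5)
Your reduction of part a) to part b) is exactly the paper's final step: the identity you propose to verify is Proposition 5.3 of the paper, namely $\sum_{k,m\geq 1}(\Upsilon)_{k,m}w_1^{-k}w_2^{-m}=\frac{w_1w_2(w_2-w_1)}{(w_1w_2-1)(w_1^2-1)(w_2^2-1)}$ for $|w_1|,|w_2|>1$, and your leading-order check is consistent with it. The overall strategy --- Pfaffian form via Gram-matrix inversion, a universal specialization-independent matrix $\Upsilon$, then resummation into a double contour integral --- is also the paper's. Where you diverge is the middle step: the paper does not pass through contour integrals, Vandermonde determinants and a de Bruijn-type identity for the weight $W(\lambda)$. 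Instead it truncates to diagrams with at most $N$ rows, observes that the weight is literally of the form $\det[\phi_i(l_j),\psi_i(l_j)]$ with $\phi_i(l)=\pi\{e_{l+i+1}\}$, $\psi_i(l)=\pi\{e_{l+i}\}$, applies the Tracy--Widom Pfaffian lemma for such ensembles verbatim, and then lets $N\to\infty$ using Rains' result that $(M(N))^{-1}\to(M^\infty)^{-1}$ entrywise. Your route can be made to work (it is closer to Borodin--Rains), but you would still owe the Schur--de Bruijn factorization you only gesture at, and you say nothing about the $N\to\infty$ (or infinite-tail) regularization beyond flagging it.

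The genuine gap is in how $\Upsilon$ and the ratio $\pi\{E(w)\}/\pi\{E(1/w)\}$ actually arise. You describe $\Upsilon$ as ``the inverse Gram matrix'' and attribute the ratio to ``the inversion coupling the specialization to its reflection,'' but this is a heuristic, not a mechanism, and it misidentifies the object: $\Upsilon$ is not the inverse of the Gram matrix. The paper's key observation is the factorization $M^\infty=T\,D\,T^{t}$, where $T$ is the semi-infinite Toeplitz matrix with symbol $\pi\{E(z)\}$ and $D$ is the tridiagonal antisymmetric matrix $(D)_{k,m}=\delta_{k+1,m}-\delta_{k,m+1}$; then $(M^\infty)^{-1}=(T^{t})^{-1}\Upsilon\,T^{-1}$ with $\Upsilon=D^{-1}$, and the fact that $T^{-1}$ is again Toeplitz with symbol $1/\pi\{E(z)\}$ (valid precisely because the symbol has no negative Fourier coefficients --- this is the paper's Lemma 5.2) is what produces $\Phi_k(x)=[z^{x+k}]\bigl(\pi\{E(z)\}/\pi\{E(z^{-1})\}\bigr)$. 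Your plan to ``invert by inspection using the triangular structure'' does not apply as stated, since the matrix to be inverted is the tridiagonal $D$, not a triangular matrix; without the $TDT^t$ factorization you have no route to either the explicit $\Upsilon$ of \eqref{EquationUpsilon} or the single-particle functions \eqref{Equation3.9}, and these are the substance of the theorem.
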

\begin{rem}
For a specific specialization $\pi$ of $\Lambda$ one can
investigate conditions under which equations
(\ref{EquationMainTheorem1}), (\ref{EquationMainTheorem2}) become
numerical equalities.
\end{rem}
Define the correlation function of $\MC_{\Schur,2}$ by
$$
\varrho_{\Schur,2}(X)=\MC_{\Schur,2}\left(\{\lambda|X\subset\DC_2(\lambda)\}\right)
$$
If $\MC_{\Schur,2}$ is positive definite, then it is a probability
measure defined on $\Y$, and $\varrho_{\Schur,2}(X)$ is the
probability that the random point configuration $\DC_{2}(\lambda)$
contains the fixed $n$-point configuration $X=(x_1,\ldots,x_n)$.
By Theorem \ref{MainTheorem} we have
$$
\varrho_{\Schur,2}(X)=\Pf\left[K_{\Schur,2}(x_i,x_j)\right]_{i,j=1}^n,
$$
where the correlation kernel, $K_{\Schur,2}(x,y)$, is given by
formulae (\ref{EquationMainTheorem1}),
(\ref{EquationMainTheorem2}).
\section{Proof of Theorem \ref{MainTheorem}}
Let $X=(x_1,\ldots, x_n)$ be a fixed subset of $\Z$ consisting of
$n$ pairwise  distinct points. Let us pick an integer $N$ such
that $N>n$, and such that $X$ is a subset of
$\{-2N,-2N+1,\ldots\}$. Define
$$
\varrho_{\Schur,2}^{(N)}(X)=\frac{\MC_{\Schur,2}\left\{\lambda|\lambda\in\Y(N),
X\subset\{\lambda_i-2i\}\right\}}{\MC_{\Schur,2}\left\{\lambda|\lambda\in\Y(N)\right\}}.
$$
Here $\Y(N)$ denotes the subset of $\Y$ consisting of Young
diagrams whith number of rows  less or equal to $N$. Clearly,
$\varrho_{\Schur,2}^{(N)}(X)$ converges to $\varrho_{\Schur,2}(X)$
in the statement of Theorem \ref{MainTheorem} as
$N\longrightarrow\infty$.

For a given Young diagram $\lambda$ from $\Y(N)$ we define
$l_j=\lambda_j-2j$, $1\leq j\leq N$ (if the length $l(\lambda)$ of
the diagram $\lambda$
 is less then $N$ we set $\lambda_{l(\lambda)+1}=0,
\lambda_{l(\lambda)+2}=0,\ldots, \lambda_{N}=0$). We also
introduce functions $\phi_i(l)$ and $\psi_i(l)$ on $\Z$ (where
$1\leq i\leq 2l(\lambda)$) by formulae
$\phi_i(l)=\pi\{e_{l+i+1}\}$, $\psi_i(l)=\pi\{e_{l+i}\}$, where
 $\pi$ is a specialization of the algebra $\Lambda$ of symmetric
 functions.
With this notation we can rewrite $\varrho_{\Schur,2}^{(N)}(X)$ as
follows
\begin{equation}\label{EquationVarrho(M)}
\varrho_{\Schur,2}^{(N)}(X)=\frac{\underset{X\subset\{l_1,\ldots,l_N\}}{\sum\limits_{l_1>\ldots
>l_{N}\geq -2N}}\det\left[\phi_i(l_j),\psi_i(l_j)\right]}{\sum\limits_{l_1>\ldots
>l_{N}\geq
-2N}\det\left[\phi_i(l_j),\psi_i(l_j)\right]},
\end{equation}
where $1\leq i\leq 2N,\; 1\leq j\leq N$. By methods of Random
Matrix Theory (see, for example, Tracy and Widom \cite{tracy},
Section 8) we can represent $\varrho_{\Schur,2}^{(N)}(X)$ as a
Pfaffian of a $2\times 2$ matrix valued kernel. Namely, we can
obtain the following Lemma
\begin{lem}\label{LemmaTracyWidomFormula}
Suppose that $X$ and $N$ are chosen as described above, and
suppose that $\varrho^{(N)}(X)$ is defined by equation
(\ref{EquationVarrho(M)}), where $\phi_i(l)$, $\psi_i(l)$  are now
arbitrary functions of finite support defined on $\Z$, and $1\leq
i\leq N$. The following identity holds true
$$
\varrho^{(N)}(X)=\Pf[K^{(N)}(x_i,x_j)]_{i,j=1}^n,
$$
where $K^{(N)}(x,y)$ is a $2\times 2$ matrix valued kernel defined
by the formula
\begin{equation}
K^{(N)}(x,y)=\left[\begin{array}{cc}
  \sum\limits_{i,j=1}^{2N}\phi_i(x)((M(N))^{-1})_{i,j}\phi_j(y) & -\sum\limits_{i,j=1}^{2N}\phi_i(x)((M(N))^{-1})_{i,j}\psi_j(y) \\
  -\sum\limits_{i,j=1}^{2N}\psi_i(x)((M(N))^{-1})_{i,j}\phi_j(y) & \sum\limits_{i,j=1}^{2M}\psi_i(x)((M(N))^{-1})_{i,j}\psi_j(y) \\
\end{array}\right],
\nonumber
\end{equation}
where
$(M(N))_{i,j}=\sum\limits_{x=-{2N}}^{+\infty}\left(\phi_i(x)\psi_j(x)-\phi_j(x)\psi_i(x)\right)$,
$1\leq i,j\leq 2N$.
\end{lem}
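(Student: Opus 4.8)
The plan is to follow the standard Pfaffian-point-process reduction used in random matrix theory (Tracy and Widom \cite{tracy}, Section 8): rewrite both the numerator and the denominator of $\varrho^{(N)}(X)$ as Pfaffians of moment-type matrices via de Bruijn's summation formula, and then extract the $2\times 2$ matrix kernel by a Pfaffian Schur-complement computation. First I would symmetrize. Grouping the two columns attached to a point $l$ into the $2N\times 2$ block $C(l)$ with columns $(\phi_i(l))_i$ and $(\psi_i(l))_i$, the summand is $D(l_1,\dots,l_N)=\det[\,C(l_1)\mid\cdots\mid C(l_N)\,]$. Interchanging two arguments $l_j$ permutes the columns in pairs, an even permutation, so $D$ is symmetric in $(l_1,\dots,l_N)$ and vanishes when two arguments coincide. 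Hence each ordered sum equals $1/N!$ times the unrestricted one, and the constraint $X\subset\{l_j\}$ is imposed by freezing $n$ arguments to the values $x_1,\dots,x_n$ and letting the rest run freely: the denominator becomes $\tfrac1{N!}\sum_{l_1,\dots,l_N}D$ and the numerator becomes $\tfrac1{(N-n)!}\sum_{l_{n+1},\dots,l_N}D(x_1,\dots,x_n,l_{n+1},\dots,l_N)$, the factor $N!/(N-n)!$ cancelling against $1/N!$.

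Second, I would evaluate both sums. With $J=\left[\begin{smallmatrix}0&1\\-1&0\end{smallmatrix}\right]$ the moment matrix is $M(N)=\sum_{x\ge-2N}C(x)JC(x)^{\mathsf T}$, and one checks its $(i,j)$ entry is exactly $\sum_x(\phi_i\psi_j-\phi_j\psi_i)(x)$, matching the definition in the statement. De Bruijn's formula gives $\tfrac1{N!}\sum_{l_1,\dots,l_N}D=\Pf[M(N)]$ for the denominator. For the numerator only the $N-n$ free points are summed while the blocks $C(x_1),\dots,C(x_n)$ are frozen; the bordered version of de Bruijn's formula expresses this sum as $\Pf\left[\begin{smallmatrix}0&W^{\mathsf T}\\-W&M(N)\end{smallmatrix}\right]$, where $W$ is the $2N\times 2n$ matrix of frozen columns $\phi(x_a),\psi(x_a)$ and the vanishing diagonal block records that the frozen points carry no summation.

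Third, the Pfaffian Schur-complement identity $\Pf\left[\begin{smallmatrix}0&W^{\mathsf T}\\-W&M\end{smallmatrix}\right]=\Pf[M]\,\Pf[-W^{\mathsf T}M^{-1}W]$ reduces the ratio to $\varrho^{(N)}(X)=\Pf[-W^{\mathsf T}M(N)^{-1}W]$, the normalization $\Pf[M(N)]$ cancelling. The $(a,b)$ block of $W^{\mathsf T}M^{-1}W$ is $C(x_a)^{\mathsf T}M^{-1}C(x_b)$, whose four entries are precisely $\sum_{i,j}\phi_i(x_a)(M^{-1})_{ij}\phi_j(x_b)$ and its three companions. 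Conjugating each block by $\diag(1,-1)$ turns these into the entries of $K^{(N)}(x_a,x_b)$, the conjugation being exactly what introduces the minus signs on the off-diagonal; tracking the sign $(-1)^n$ from this conjugation against the $(-1)^n$ in $\Pf[-\,\cdot\,]$ then identifies $\Pf[-W^{\mathsf T}M^{-1}W]$ with $\Pf[K^{(N)}(x_a,x_b)]_{a,b=1}^n$, as claimed.

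The step I expect to be the main obstacle is the bordered de Bruijn identity together with the attendant sign bookkeeping: one must arrange the augmented Pfaffian so that the frozen-versus-summed pairings populate the off-diagonal blocks $\pm W$, confirm that the frozen-versus-frozen block is genuinely the zero matrix (since those points are not summed), and verify that the global signs from de Bruijn, from the Schur complement, and from the $\diag(1,-1)$ conjugation cancel in the ratio so that no spurious factor survives. The finiteness of all sums is guaranteed by the finite support of the $\phi_i,\psi_i$, and the Schur-complement step requires only $\Pf[M(N)]\neq0$, i.e. a nonvanishing normalization.
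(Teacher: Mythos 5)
The paper does not actually prove this lemma: it is stated as a direct import from ``methods of Random Matrix Theory'' with a pointer to Tracy and Widom \cite{tracy}, Section 8, so there is no internal argument to compare yours against. Your proposal is a correct and standard way to supply the missing proof, and its key identifications check out: the summand $\det[\,C(l_1)\mid\cdots\mid C(l_N)\,]$ is symmetric and vanishes when two arguments coincide, so both ordered sums de-symmetrize as you say; the entry computation $\bigl(C(x)JC(x)^{\mathsf T}\bigr)_{ij}=\phi_i(x)\psi_j(x)-\phi_j(x)\psi_i(x)$ matches the stated $M(N)$; and the final conjugation by $\diag(1,-1)$ in each $2\times2$ block contributes $(-1)^n$, which indeed cancels against $\Pf[-W^{\mathsf T}M^{-1}W]=(-1)^n\Pf[W^{\mathsf T}M^{-1}W]$, so that $\Pf[-W^{\mathsf T}M(N)^{-1}W]=\Pf[K^{(N)}(x_a,x_b)]_{a,b=1}^n$ exactly (in the test case $N=n=1$ one gets $\det[C(x_1)]/\Pf[M]$, as it should be). The one step you leave genuinely unfinished is the one you yourself flag: the bordered de Bruijn identity for the numerator with frozen blocks, where the sign conventions matter. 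With the usual Schur-complement identity $\Pf\left[\begin{smallmatrix}A&B\\-B^{\mathsf T}&M\end{smallmatrix}\right]=\Pf[M]\,\Pf[A+BM^{-1}B^{\mathsf T}]$, the block matrix as you wrote it gives $\Pf[M]\Pf[W^{\mathsf T}M^{-1}W]$, which is off by $(-1)^n$ from what you need; so either the off-diagonal blocks must carry the opposite signs or the bordered de Bruijn formula itself produces a compensating $(-1)^{n(\cdot)}$ prefactor --- this is routine but must be pinned down. Two further remarks: your frozen-variable route is essentially the derivation in Rains \cite{rains} (which the paper relies on immediately afterwards for the $N\to\infty$ limit of $M(N)^{-1}$), whereas Tracy and Widom obtain such kernels by expanding a generating functional; either is acceptable, and yours yields the $n$-point function directly. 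And both your argument and the statement implicitly require $\Pf[M(N)]\neq0$, i.e.\ invertibility of $M(N)$, which deserves to be recorded as a hypothesis.
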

In our case, Lemma \ref{LemmaTracyWidomFormula} (with
$\phi_i(x)=\pi(e_{x+i+1})$ and $\psi_i(x)=\pi(e_{x+i})$, where
$x\in\Z$) gives the following representation for the correlation
function $\varrho_{\Schur,2}^{(N)}(X)$
\begin{equation}
\varrho_{\Schur,2}^{(N)}(X)=\Pf\left[K_{\Schur,2}^{(N)}(x_i,x_j)\right]_{i,j=1}^n,
\end{equation}
where
$$
K_{\Schur,2}^{(N)}(x,y)=\left[\begin{array}{cc}
  S_{\Schur,2}^{(N)}(x+1,y+1) & -S_{\Schur,2}^{(N)}(x+1,y) \\
  -S_{\Schur,2}^{(N)}(x,y+1) & S_{\Schur,2}^{(N)}(x,y) \\
\end{array}
  \right],
$$
and where the function $S_{\Schur,2}^{(N)}(x,y)$ is defined by
$$
S_{\Schur,2}^{(N)}(x,y)=\sum\limits_{i,j=1}^{2N}\pi(e_{x+i+1})((M(N))^{-1})_{i,j}\pi(e_{y+j+1}).
$$
In the formulae  above the matrix $M(N)$ is a $2N\times 2N$ matrix
whose matrix coefficients are defined by the formula
$$
(M(N))_{i,j}=\sum\limits_{x=-2N}^{+\infty}\left(\pi(e_{x+i+1})\pi(e_{x+j})-\pi(e_{x+j+1})\pi(e_{x+i})\right),\;
1\leq i,j\leq 2N.
$$
As $\pi\{e_{-k}\}=0$ for $k=1,2,\ldots$ we can rewrite the formula
for the matrix coefficients $(M(N))_{i,j}$ as follows
$$
(M(N))_{i,j}=\sum\limits_{x=-\infty}^{+\infty}\left(\pi(e_{x+i+1})\pi(e_{x+j})-
\pi(e_{x+i})\pi(e_{x+j+1})\right), \;1\leq i,j\leq 2N.
$$
Clearly, the matrix $M(N)$ can be understood as the $N$th
principal minor of the infinite matrix $M^{\infty}$ whose matrix
elements are defined by the formula
\begin{equation}\label{EquationMInfinity}
(M^{\infty})_{i,j}=\sum\limits_{x=-\infty}^{+\infty}\left(\pi(e_{x+i+1})\pi(e_{x+j})-
\pi(e_{x+i})\pi(e_{x+j+1})\right), \;1\leq i,j.
\end{equation}
Rains \cite{rains} has shown that for matrices of this type
$$
\underset{N\rightarrow\infty}{\lim
}\left((M(N))^{-1}-(M^{\infty})^{-1}\right)_{i,j}=0
$$
for any fixed $i,j$, see Lemma 2.1, equations (3.12), (4.23) in
Rains \cite{rains}. We then deduce that the kernel
$K_{\Schur,2}(x,y)$ can be written as
$$
K_{\Schur,2}(x,y)=\left[\begin{array}{cc}
  S_{\Schur,2}(x,y) & -S_{\Schur,2}(x+1,y) \\
  -S_{\Schur,2}(x,y) & S_{\Schur,2}(x,y) \\
\end{array}
  \right],
$$
where
\begin{equation}\label{EquationSchur2Sum}
S_{\Schur,2}(x,y)=\sum\limits_{i,j=1}^{\infty}\pi(e_{x+i})((M^{\infty})^{-1})_{i,j}\pi(e_{y+j}),
\end{equation}
and the semi-infinite matrix $M^{\infty}$ is defined by equation
(\ref{EquationMInfinity}).

It is convenient to rewrite formula (\ref{EquationMInfinity}). Set
$k=x+i+j$. Note that $k$ can be any (negative or positive)
integer. We have $x+i=k-j$, and $x+j=k-i$. Then
$(M^{\infty})_{i,j}$ can be rewritten as
$$
(M^{\infty})_{i,j}=\sum\limits_{k=-\infty}^{+\infty}\left(\pi(e_{k-j+1})\pi(e_{k-i})-
\pi(e_{k-j})\pi(e_{k-i+1})\right), \;1\leq i,j.
$$
As soon as $\pi\{e_{-k}\}=0$, for $k=1,2,\ldots$, we can also
represent $(M^{\infty})_{i,j}$  as
$$
(M^{\infty})_{i,j}=\sum\limits_{k=1}^{+\infty}\left(\pi(e_{k-j+1})\pi(e_{k-i})-
\pi(e_{k-j})\pi(e_{k-i+1})\right), \;1\leq i,j.
$$

Let $\varphi(z)$ be a formal series of the form
$$
\varphi(z)=\sum\limits_{n\in\Z}\varphi_nz^n.
$$
It is convenient to introduce the following notation. Let
$[z^n]\varphi$ stand for the coefficient of $z^n$ in $\varphi$,
and let $T\varphi$ denote the Toeplitz semi-infinite matrix
defined by $\varphi$. The matrix elements of $T\varphi$ are
$$
(T\varphi)_{i,j}=\varphi_{j-i}=[z^{j-i}]\varphi=\frac{1}{2\pi
i}\oint\limits_{\{w\}}w^{j-i}\varphi(w)\frac{dw}{w},
$$
where $\{w\}$ is an arbitrary simple contour which goes around $0$
in positive direction.

With the above notation rewrite the expression for the matrix
elements $(M^{\infty})_{i,j}$ as follows
$$
(M^{\infty})_{i,j}=\sum\limits_{k=1}^{+\infty}\left(T\pi
\{E(z)\}\right)_{j,k+1}\left(T\pi\{E(z)\}\right)_{i,k}-
\left(T\pi\{E(z)\}\right)_{i,k+1}\left(T\pi \{E(z)\}\right)_{j,k},
$$
where $1\leq i,j$.

Up to now we have used essentially the same arguments as in Rains
\cite{rains}. In what follows we show that $S_{\Schur,2}(x,y)$
(equation (\ref{EquationSchur2Sum})) admits the contour integral
representation as in Theorem \ref{MainTheorem}.

Introduce a semi-infinite matrix $D$ defining its matrix elements
by
$$
(D)_{k,m}=\delta_{k+1,m}-\delta_{k,m+1},\;\; 1\leq k,m.
$$
Then we obtain
$$
(M^{\infty})_{i,j}=\sum\limits_{k,m=1}^{+\infty}\left(T\pi
\{E(z)\}\right)_{i,k}(D)_{k,m}\left(T\pi\{E(z)\}\right)_{j,m},\;
1\leq i,j.
$$
Therefore, we have $M^{\infty}=T\pi \{E(z)\}D\left(T\pi
\{E(z)\}\right)^t$. Note that $D$ has the following form
$$
D=\left[\begin{array}{ccccccccc}
  0 & 1 & 0 & 0 & 0 & 0 & 0 & 0 & \cdots \\
  -1 & 0 & 1 & 0 & 0 & 0 & 0 & 0 & \cdots \\
  0 & -1 & 0 & 1 & 0 & 0 & 0 & 0 & \cdots \\
  0 & 0 & -1 & 0 & 1 & 0 & 0 & 0 & \cdots \\
  0 & 0 & 0 & -1 & 0 & 1 & 0 & 0 & \cdots \\
0 & 0 & 0 & 0 & -1 & 0 & 1 & 0 & \cdots \\
 0 & 0 & 0 & 0 & 0 & -1 & 0 & 1 & \cdots \\
0 & 0 & 0 & 0 & 0 & 0 & -1 & 0 & \cdots \\
\vdots & \vdots & \vdots & \vdots & \vdots & \ddots & \ddots &\ddots &\ddots \\
\end{array}\right].
$$
This matrix has inverse, which is denoted by $\Upsilon$, whose
explicit form is given by equation (\ref{EquationUpsilon}).

Now we can write $(M^{\infty})^{-1}=\left[\left(T\pi
\{E(z)\}\right)^t\right]^{-1}\Upsilon\left[T\pi
\{E(z)\}\right]^{-1}$, and we obtain the following expression for
the matrix elements of $(M^{\infty})^{-1}$
\begin{equation}\label{Equation3.7}
\begin{split}
&\left((M^{\infty})^{-1}\right)_{i,j}\\
&=\sum\limits_{k,m=1}^{\infty}\left(\left[\left(T\pi
\{E(z)\}\right)^t\right]^{-1}\right)_{i,k}\left(\Upsilon\right)_{k,m}\left(\left[T\pi
\{E(z)\}\right]^{-1}\right)_{m,j},
\end{split}
\end{equation}
where $i,j\geq 1$.
\begin{lem} For any integers $j,m$ such that $j,m\geq 1$
\begin{equation}\label{Equation3.8}
\left(\left[T\pi
\{E(z)\}\right]^{-1}\right)_{j,m}=[z^{m-j}]\left(\frac{1}{\pi\{E(z)\}}\right).
\end{equation}
\end{lem}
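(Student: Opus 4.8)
The plan is to show that the map $\varphi \mapsto T\varphi$ is multiplicative on the subalgebra of formal series carrying only nonnegative powers of $z$, and then to apply this to $\varphi = \pi\{E(z)\}$ together with its reciprocal. Writing $\varphi = \pi\{E(z)\}$ and $\varphi_n = [z^n]\varphi$, the point is that $[T\pi\{E(z)\}]^{-1}$ is again a Toeplitz matrix, and the symbol of the inverse is the reciprocal of the symbol.

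First I would record the relevant structure. Since $E(u) = 1 + \sum_{k\geq 1} e_k u^k$ involves only nonnegative powers, the same holds for $\varphi = \pi\{E(z)\} = 1 + \sum_{k\geq 1}\pi\{e_k\}z^k$; thus $\varphi_n = 0$ for $n < 0$ and $\varphi_0 = 1$. Consequently $(T\varphi)_{i,j} = \varphi_{j-i}$ vanishes whenever $j < i$ and equals $1$ on the diagonal, so $T\varphi$ is upper-triangular with unit diagonal, hence invertible as a semi-infinite matrix. Moreover, because its constant term is a unit, the reciprocal $\varphi^{-1} = 1/\pi\{E(z)\}$ is a well-defined formal power series, again with only nonnegative powers of $z$ and constant term $1$.

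The key step is the identity $T(\varphi\psi) = (T\varphi)(T\psi)$ for any two series $\varphi,\psi$ having only nonnegative powers, which I would verify by the direct computation
\[
\bigl((T\varphi)(T\psi)\bigr)_{i,j} = \sum_{k\geq 1}(T\varphi)_{i,k}(T\psi)_{k,j} = \sum_{k\geq 1}\varphi_{k-i}\,\psi_{j-k}.
\]
Setting $a = k-i$, the summand is $\varphi_a\,\psi_{(j-i)-a}$, which is nonzero only for $0 \le a \le j-i$ (using $\varphi_a = 0$ for $a < 0$ and $\psi_{(j-i)-a} = 0$ for $a > j-i$); in particular the indices $k = i+a$ all lie in $\{i, i+1, \dots, j\}\subset\{1,2,\dots\}$, so the sum is finite and stays within the allowed semi-infinite range. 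Hence it equals $(\varphi\psi)_{j-i} = (T(\varphi\psi))_{i,j}$. This finiteness is the only delicate point, and it is exactly where the one-sidedness of $E(z)$ is essential: it guarantees that the semi-infinite matrix products are well-defined finite sums, so the hard part is simply checking this rather than any deep estimate.

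Applying the identity with $\psi = \varphi^{-1}$ gives $(T\varphi)(T\varphi^{-1}) = T(\varphi\varphi^{-1}) = T(1) = I$, since $(T1)_{i,j} = \delta_{j-i,0} = \delta_{ij}$; the same computation in the opposite order yields $(T\varphi^{-1})(T\varphi) = I$. Therefore $[T\pi\{E(z)\}]^{-1} = T\bigl(1/\pi\{E(z)\}\bigr)$, and reading off the $(j,m)$ entry gives
\[
\left(\left[T\pi\{E(z)\}\right]^{-1}\right)_{j,m} = \left(T\tfrac{1}{\pi\{E(z)\}}\right)_{j,m} = [z^{m-j}]\left(\frac{1}{\pi\{E(z)\}}\right),
\]
which is the assertion of the lemma.
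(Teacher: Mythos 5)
Your proof is correct and follows essentially the same route as the paper: both arguments reduce to the convolution identity $\sum_k \varphi_{k-i}\psi_{j-k}=(\varphi\psi)_{j-i}$, using the one-sidedness of $\pi\{E(z)\}$ (and of its reciprocal) to ensure the sum is finite and stays within the semi-infinite index range. You merely package this as a general multiplicativity statement $T(\varphi\psi)=(T\varphi)(T\psi)$ before specializing to $\psi=\varphi^{-1}$, whereas the paper verifies directly that the candidate inverse satisfies the defining relation; the content is the same.
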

\begin{proof}
We need to check that
\begin{equation}\label{Syma}
\sum\limits_{k=1}^{\infty}\left(\left[T\pi
\{E(z)\}\right]^{-1}\right)_{j,k}\left(\left[T\pi
\{E(z)\}\right]\right)_{k,m}=\delta_{j,m}
\end{equation}
remains to be valid, if we replace $\left(\left[T\pi
\{E(z)\}\right]^{-1}\right)_{j,k}$ by
$[z^{k-j}]\left(\frac{1}{\pi\{E(z)\}}\right)$. Note that
$$
[z^k]\left(\frac{1}{\pi\{E(z)\}}\right)=0,\;\mbox{for}\; k\leq -1,
$$
and
$$
[z^k]\left(\pi\{E(z)\}\right)=0,\;\mbox{for}\; k\leq -1.
$$
Now the left-hand side of  equation (\ref{Syma}) can be written as
\begin{equation}
\begin{split}
&\sum\limits_{k=1}^{\infty}[z^{k-j}]\left(\frac{1}{\pi\{E(z)\}}\right)\left(\left[T\pi
\{E(z)\}\right]\right)_{k,m}=\sum\limits_{k=1}^{\infty}[z^{k-j}]\left(\frac{1}{\pi\{E(z)\}}\right)[z^{m-k}]\left(\pi\{E(z)\}\right)\\
&=\sum\limits_{k=j}^{\infty}[z^{k-j}]\left(\frac{1}{\pi\{E(z)\}}\right)[z^{m-k}]\left(\pi\{E(z)\}\right)
=\sum\limits_{n=0}^{\infty}[z^{n}]\left(\frac{1}{\pi\{E(z)\}}\right)[z^{m-n-j}]\left(\pi\{E(z)\}\right).
\end{split}
\nonumber
\end{equation}
Taking into account the formula
$$
[z^k]\left(A(z)B(z)\right)=\sum\limits_{n=0}^{\infty}[z^n]A(z)[z^{k-n}]B(z),
$$
we obtain that
$\sum\limits_{k=1}^{\infty}[z^{k-j}]\left(\frac{1}{\pi\{E(z)\}}\right)\left(\left[T\pi
\{E(z)\}\right]\right)_{k,m}$ equals $\delta_{j,m}$.
\end{proof}
Now we are ready to derive the contour integral representation for
the kernel $S_{\Schur,2}(x,y)$.  Equations (\ref{Equation3.7}),
(\ref{Equation3.8}) imply formula (\ref{EquationSSchur2asSUM}),
with $\Phi_k$  defined by the formula
\begin{equation}\label{EquationPhiSum}
\Phi_k(x)=\sum\limits_{i=1}^{\infty}[z^{x+i}]\pi\{E(z)\}[z^{i-k}]\left(\frac{1}{\pi\{E(z)\}}\right),
\; k\geq 1.
\end{equation}
It is possible to represent $\Phi_k(x)$ as a contour integral, see
equation (\ref{Equation3.9}). To see this rewrite formula
(\ref{EquationPhiSum}) as follows
\begin{equation}
\Phi_k(x)=
\sum\limits_{i=-\infty}^{\infty}[z^{x+i}]\pi\{E(z)\}[z^{i-k}]\left(\frac{1}{\pi\{E(z)\}}\right),
\nonumber
\end{equation}
where we have used the fact that the formal  series
$\left(\pi\{E(z)\}\right)^{-1}$ does not contain terms of the form
$a_nz^{-n}, n=1,2,\ldots$. Changing the index of the summation,
and applying formulae
$$
[z^k]A(z^{-1})=[z^{-k}]A(z),
$$
and
$$
[z^k]\left(A(z)B(z)\right)=\sum\limits_{i=-\infty}^{\infty}[z^i]A(z)[z^{k-i}]B(z),
$$
we find
$$
\Phi_k(x)=[z^{x+k}]\left[\frac{\pi\{E(z)\}}{\pi\{E(z^{-1})\}}\right].
$$
This formula is equivalent to equation (\ref{Equation3.9}). Using
(\ref{Equation3.9}) we can represent $S_{\Schur,2}(x,y)$ as
follows
$$
S_{\Schur,2}(x,y)=\frac{1}{(2\pi
i)^2}\oint\limits_{\{w_1\}}\oint\limits_{\{w_2\}}\frac{\pi\{E(w_1)\}}{\pi\{E(\frac{1}{w_1})\}}
\frac{\pi\{E(w_2)\}}{\pi\{E(\frac{1}{w_2})\}}\left(\sum\limits_{k,m=1}^{\infty}\frac{(\Upsilon)_{k,m}}{w_1^kw_2^m}\right)
\frac{dw_1}{w_1^{x+1}}\frac{dw_2}{w_2^{y+1}},
$$
where $\{w_1\}$, $\{w_2\}$ are simple contours which go around $0$
in positive direction.

The proof of Theorem \ref{MainTheorem} is accomplished by the
following
\begin{prop} Assume that $|w_1|>1$, $|w_2|>1$, and let
$\Upsilon$ be a semi-infinite matrix defined by equation
(\ref{EquationUpsilon}). Then
$$
\sum\limits_{k,m=1}^{\infty}\frac{(\Upsilon)_{k,m}}{w_1^kw_2^m}
=\frac{w_2w_1(w_2-w_1)}{(w_2w_1-1)(w_2^2-1)(w_1^2-1)}.
$$
\end{prop}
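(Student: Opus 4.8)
The plan is to regard the left-hand side as an absolutely convergent double power series and evaluate it by summing geometric series. First I would pass to the variables $u=1/w_1$ and $v=1/w_2$; since $|w_1|>1$ and $|w_2|>1$ we have $|u|,|v|<1$, so all the geometric series appearing below converge absolutely and any rearrangement is legitimate. In these variables the quantity to be computed is $\sum_{k,m\geq 1}(\Upsilon)_{k,m}\,u^k v^m$, and the whole task reduces to reading off the support of $\Upsilon$ and summing.

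Next I would transcribe the nonzero entries of $\Upsilon$ from its defining relations (or directly from the displayed matrix). The key observation is that $\Upsilon$ is supported only on index pairs of opposite parity: every entry with $k,m$ of equal parity vanishes, while $(\Upsilon)_{2a-1,2b}=-1$ precisely when $b\geq a$, and by antisymmetry $(\Upsilon)_{2a,2b-1}=+1$ precisely when $b\leq a$. This splits the double sum cleanly into two families, one indexed by (odd, even) pairs and one by (even, odd) pairs.

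I would then evaluate each family by carrying out the inner summation first. For the (odd, even) family the inner sum $\sum_{b\geq a}v^{2b}$ is a geometric tail equal to $v^{2a}/(1-v^2)$, and the remaining sum over $a$ is again geometric, yielding a closed rational function of $u,v$. For the (even, odd) family the inner sum $\sum_{b=1}^{a}v^{2b-1}$ is a finite geometric sum, and performing the outer summation over $a$ likewise gives a rational function. Adding the two contributions, I would factor out $uv/(1-u^2v^2)$, put the remaining two terms over the common denominator $(1-u^2)(1-v^2)$, and observe that the numerator collapses to $(u-v)(1+uv)$; using $1-u^2v^2=(1-uv)(1+uv)$ the factor $(1+uv)$ cancels.

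Finally I would substitute back $u=1/w_1$, $v=1/w_2$ and clear the resulting powers of $w_1$ and $w_2$, arriving at $\frac{w_1 w_2 (w_2-w_1)}{(w_1 w_2-1)(w_1^2-1)(w_2^2-1)}$, as claimed. The only place that requires care is the second step: correctly identifying the parity/block structure of $\Upsilon$ together with the inequality ranges ($b\geq a$ for the $-1$ entries, $b\leq a$ for the $+1$ entries) that delimit the two families. Once those ranges are pinned down, the rest is routine geometric-series algebra and one factorization, so I expect the combinatorial bookkeeping of $\Upsilon$ to be the main (and essentially the only) obstacle.
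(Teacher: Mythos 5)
Your proposal is correct: you read off the support of $\Upsilon$ exactly as the displayed matrix dictates ($(\Upsilon)_{2a-1,2b}=-1$ for $b\geq a$, $(\Upsilon)_{2a,2b-1}=+1$ for $b\leq a$), and your geometric-series computation reproduces the claimed rational function. This is essentially the same argument as the paper's proof, which likewise expands the double sum by parity blocks and sums geometric series (differing only in the trivial choice of which index is summed first in the $+1$ family).
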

\begin{proof}
It is convenient to represent the sum in the left-hand side of the
equation just written above as
$$
\left[\begin{array}{cccc}
  \frac{1}{w_1} & \frac{1}{w_1^2} & \frac{1}{w_1^3} & \ldots \\
\end{array}\right]\Upsilon\left[\begin{array}{c}
  \frac{1}{w_2} \\
  \frac{1}{w_2^2} \\
  \frac{1}{w_2^3} \\
  \vdots \\
\end{array}\right].
$$
Using the explicit form of $\Upsilon$ (see equation
(\ref{EquationUpsilon})) we find that this matrix product equals
\begin{equation}
\begin{split}
&\frac{1}{w_1}\left(-\frac{1}{w_2^2}-\frac{1}{w_2^4}-\frac{1}{w_2^6}-\frac{1}{w_2^8}-\ldots\right)\\
&+\frac{1}{w_1^2}\frac{1}{w_2}\\
&+\frac{1}{w_1^3}\left(-\frac{1}{w_2^4}-\frac{1}{w_2^6}-\frac{1}{w_2^8}-\frac{1}{w_2^{10}}-\ldots\right)\\
&+\frac{1}{w_1^4}\frac{1}{w_2}+\frac{1}{w_1^4}\frac{1}{w_2^3}\\
&+\frac{1}{w_1^5}\left(-\frac{1}{w_2^6}-\frac{1}{w_2^8}-\frac{1}{w_2^{10}}-\frac{1}{w_2^{12}}-\ldots\right)\\
&+\frac{1}{w_1^6}\frac{1}{w_2}+\frac{1}{w_1^6}\frac{1}{w_2^3}+\frac{1}{w_1^6}\frac{1}{w_2^5}\\
&+\frac{1}{w_1^7}\left(-\frac{1}{w_2^8}-\frac{1}{w_2^{10}}-\frac{1}{w_2^{12}}-\frac{1}{w_2^{14}}-\ldots\right)\\
&+\frac{1}{w_1^8}\frac{1}{w_2}+\frac{1}{w_1^8}\frac{1}{w_2^3}+\frac{1}{w_1^8}\frac{1}{w_2^5}+\frac{1}{w_1^8}\frac{1}{w_2^7}\\
&+\ldots\\
&=-\frac{1}{w_1w_2^2}\frac{1}{1-\frac{1}{w_2^2}}-\frac{1}{w_1^3w_2^4}\frac{1}{1-\frac{1}{w_2^2}}
-\frac{1}{w_1^5w_2^6}\frac{1}{1-\frac{1}{w_2^2}}-\ldots\\
&+\frac{1}{w_1^2w_2}\frac{1}{1-\frac{1}{w_1^2}}+\frac{1}{w_1^4w_2^3}\frac{1}{1-\frac{1}{w_1^2}}
+\frac{1}{w_1^6w_2^5}\frac{1}{1-\frac{1}{w_1^2}}+\ldots\\
&=\frac{w_1w_2(w_2-w_1)}{(w_1w_2-1)(w_1^2-1)(w_2^2-1)}.
\end{split}
\nonumber
\end{equation}
\end{proof}\section{Proof of Theorem \ref{Theoremzmeasure2}}
\begin{prop}\label{Proposition6.1}
We have
\begin{equation}\label{Equationz59}
\begin{split}
\MC_{z,z-1,2,\xi}(\lambda)
=&(1-\xi)^{\frac{z(z-1)}{2}}\xi^{|\lambda|}\left(z\right)_{\lambda,2}
\left(z-1\right)_{\lambda,2}\\
&\times\det\left[\frac{1}{(\lambda_j-2j+i+1)!},\frac{1}{(\lambda_j-2j+i)!}\right],
\end{split}
\end{equation}
where $1\leq i\leq 2l(\lambda)$, and $1\leq j\leq l(\lambda)$.
\end{prop}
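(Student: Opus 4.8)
The plan is to strip away everything that is not combinatorial and then evaluate a single determinant. Substituting $z'=z-1$, $\theta=2$ and $t=\tfrac{z(z-1)}{2}$ into the definition (\ref{EquationMzztheta}) of $\MC_{z,z',\theta,\xi}$, the prefactors $(1-\xi)^{z(z-1)/2}$, $\xi^{|\lambda|}$, $(z)_{\lambda,2}$ and $(z-1)_{\lambda,2}$ in (\ref{Equationz59}) appear verbatim. Hence the proposition is equivalent to the $z$- and $\xi$-free identity
\[
\frac{1}{H(\lambda,2)\,H'(\lambda,2)}=\det\left[\frac{1}{(l_j+i+1)!},\frac{1}{(l_j+i)!}\right],\qquad l_j:=\lambda_j-2j,
\]
with $1\le i\le 2l(\lambda)$ and $1\le j\le l(\lambda)$, and the whole task is to evaluate this determinant and match it against the hook products.

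The key observation is that, read left to right, the $2l(\lambda)$ columns are exactly those of a single factorial Vandermonde: the $r$-th column has $i$-th entry $1/(c_r+i)!$, where
\[
(c_1,c_2,\dots)=(l_1+1,\,l_1,\,l_2+1,\,l_2,\dots,\,l_{l}+1,\,l_{l}),\qquad l=l(\lambda).
\]
I would evaluate it by factoring $1/(c_r+2l)!$ out of the $r$-th column (this is legitimate for every entry, including the vanishing ones, since $c_r+2l\ge1$); the remaining entries become monic polynomials in $c_r$ of degrees $2l-1,\dots,0$, reducing the determinant to the Vandermonde $\prod_{r<s}(c_s-c_r)$. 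Splitting this product according to the pairing $c_{2j-1}=l_j+1,\ c_{2j}=l_j$ --- each within-pair difference giving $-1$, each cross-pair block $j<k$ giving $(l_k-l_j)^2[(l_k-l_j)^2-1]$ --- and checking that the signs cancel, I expect the closed form
\[
\det\left[\frac{1}{(l_j+i+1)!},\frac{1}{(l_j+i)!}\right]=\frac{\prod_{1\le j<k\le l}(L_j-L_k)^2\big[(L_j-L_k)^2-1\big]}{\prod_{j=1}^{l}L_j!\,(L_j+1)!},
\]
where $L_j:=l_j+2l=\lambda_j+2(l-j)$ are the $\theta=2$ shifted parts.

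It then remains to prove that $H(\lambda,2)H'(\lambda,2)$ is the reciprocal of the right-hand side. I would argue row by row: the contribution of row $i$,
\[
\prod_{j=1}^{\lambda_i}\big((\lambda_i-j)+2(\lambda'_j-i)+1\big)\big((\lambda_i-j)+2(\lambda'_j-i)+2\big),
\]
should equal $L_i!\,(L_i+1)!\big/\prod_{k>i}(L_i-L_k)^2[(L_i-L_k)^2-1]$, and multiplying over $i$ reproduces exactly the reciprocal of the displayed determinant. This per-row statement is the usual rewriting of a hook product in terms of the beta-numbers $L_k$, the only new feature being that the factor $2$ on the leg forces each missing value $L_i-L_k$ to be deleted together with its neighbours $L_i-L_k\pm1$, which produces the factor $(L_i-L_k)^2[(L_i-L_k)^2-1]$.

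I expect this last step to be the main obstacle. The factorial-Vandermonde evaluation is routine once the reindexing is noticed, but the product formula for $H(\lambda,2)H'(\lambda,2)$ is genuinely combinatorial: the arm depends on the column and the leg on the row, so the product does not factor over rows a priori, and the $\theta=2$ ``doubling'' of the deleted beta-values must be handled with care in the arm--leg rearrangement (a count shows the two multisets of factors agree up to a single harmless factor $1$). Alternatively one could quote the Jack hook-length product of Macdonald \cite{macdonald} for $\theta=2$ and convert it to the beta-number form above, but in either case matching the denominators $(L_i-L_k)^2[(L_i-L_k)^2-1]$ is where the real work lies.
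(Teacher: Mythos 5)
Your reduction is exactly the paper's: substituting $z'=z-1$, $\theta=2$, $t=\tfrac{z(z-1)}{2}$ into (\ref{EquationMzztheta}) isolates the single identity $\bigl(H(\lambda,2)H'(\lambda,2)\bigr)^{-1}=\det\bigl[\tfrac{1}{(\lambda_j-2j+i+1)!},\tfrac{1}{(\lambda_j-2j+i)!}\bigr]$, which is the paper's equation (\ref{Equationz2}). The difference is that the paper disposes of this identity in one sentence by citing the explicit product formula for $H(\lambda,2)H'(\lambda,2)$ from the proof of Lemma~3.5 in Borodin--Olshanski \cite{BO}, whereas you propose to prove it from scratch. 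Your two ingredients are both sound: the $2l\times 2l$ determinant really is a confluent factorial Vandermonde in the interlaced nodes $(l_1+1,l_1,\dots,l_l+1,l_l)$, and factoring $1/(c_r+2l)!$ out of each column does reduce it to $\prod_{r<s}(c_r-c_s)\big/\prod_r(c_r+2l)!$, which regroups (with overall sign $+1$, since each cross-pair block contributes the positive quantity $(L_j-L_k)^2[(L_j-L_k)^2-1]$ and each within-pair difference taken in the order $c_{2j-1}-c_{2j}$ is $+1$) to your closed form in the $L_j=\lambda_j+2(l-j)$; and your per-row identity $\prod_{j=1}^{\lambda_i}\bigl((\lambda_i-j)+2(\lambda_j'-i)+1\bigr)\bigl((\lambda_i-j)+2(\lambda_j'-i)+2\bigr)=L_i!\,(L_i+1)!\big/\prod_{k>i}(L_i-L_k)^2[(L_i-L_k)^2-1]$ is correct (it checks on small diagrams and is the $\theta=2$ beta-number form of the hook product). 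So the only unfinished business is the arm--leg rearrangement proving that per-row identity, which you correctly flag as the real work; the paper avoids it entirely by quoting \cite{BO}, and quoting that reference (or Macdonald's Jack hook-length formula) is indeed the economical way to close your argument.
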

\begin{proof}
Using the explicit formula for $H(\lambda,2)H'(\lambda,2)$ given in
the proof of Lemma 3.5 in Borodin and Olshanski \cite{BO}, we
rewrite $\left(H(\lambda, 2)H'(\lambda, 2)\right)^{-1}$ in a
determinantal form
\begin{equation}\label{Equationz2}
\begin{split}
\frac{1}{H(\lambda, 2)H'(\lambda, 2)}&=
\det\left[\frac{1}{(\lambda_j-2j+i+1)!},\frac{1}{(\lambda_j-2j+i)!}\right],
\end{split}
\end{equation}
where $1\leq i\leq 2l(\lambda)$, $1\leq j\leq l(\lambda)$. Then the
formula in the statement of the Proposition follows from the
definition of  $\MC_{z,z',\theta,\xi}(\lambda)$ (equation
(\ref{EquationMzztheta})), and from equation (\ref{Equationz2}).
\end{proof}
\begin{prop}\label{Propositionz52}
Let $\pi_z$ be the specialization of the algebra $\Lambda$ of
symmetric functions defined by
\begin{equation}\label{zrealization}
\pi_{z}(e_k)=\xi^{\frac{k}{2}}\frac{(z-k+1)(z-k+2)\ldots z}{k!}
\end{equation}
Then the following formula holds true
$$
\MC_{z,z-1,2,\xi}(\lambda)=(1-\xi)^{\frac{z(z-1)}{2}}
\det\left[\pi_{-z}\{e_{\lambda_j-2j+i+1}\},\pi_{-z}\{e_{\lambda_j-2j+i}\}\right],
$$
where $1\leq i\leq 2l(\lambda)$, and $1\leq j\leq l(\lambda)$.
\end{prop}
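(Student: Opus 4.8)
The plan is to deduce Proposition \ref{Propositionz52} from Proposition \ref{Proposition6.1} by absorbing the scalar prefactor $\xi^{|\lambda|}(z)_{\lambda,2}(z-1)_{\lambda,2}$ into the determinant through the choice of specialization. First I would make the specialization completely explicit. From (\ref{zrealization}) one gets
\[
\pi_{-z}(e_k)=\xi^{k/2}\,\frac{(-z-k+1)(-z-k+2)\cdots(-z)}{k!}=\xi^{k/2}(-1)^{k}\frac{(z)_{k}}{k!},
\]
valid for $k\geq 0$ and equal to $0$ for $k<0$ (with the convention $1/k!=0$), equivalently $\pi_{-z}\{E(u)\}=(1+\sqrt{\xi}\,u)^{-z}$. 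Substituting this into the determinant in the statement and cancelling the common factor $(1-\xi)^{z(z-1)/2}$, it remains to prove the identity
\[
\det\!\left[\pi_{-z}\{e_{\lambda_j-2j+i+1}\},\pi_{-z}\{e_{\lambda_j-2j+i}\}\right]
=\xi^{|\lambda|}(z)_{\lambda,2}(z-1)_{\lambda,2}\det\!\left[\frac{1}{(\lambda_j-2j+i+1)!},\frac{1}{(\lambda_j-2j+i)!}\right],
\]
with $1\le i\le 2l(\lambda)$, $1\le j\le l(\lambda)$; the right-hand determinant is exactly the one appearing in (\ref{Equationz2}).

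Writing $a_j=\lambda_j-2j$, the entries of the left-hand determinant are $\xi^{(a_j+i+\epsilon)/2}(-1)^{a_j+i+\epsilon}(z)_{a_j+i+\epsilon}/(a_j+i+\epsilon)!$ with $\epsilon\in\{1,0\}$ for the odd and even columns. I would first pull the scalar monomials out of the determinant: the factor $\xi^{i/2}(-1)^i$ from each row $i$, the factor $\xi^{(a_j+1)/2}(-1)^{a_j+1}$ from each odd column $2j-1$, and $\xi^{a_j/2}(-1)^{a_j}$ from each even column $2j$. A direct bookkeeping shows that the $\xi$-exponents add up to $\tfrac12\sum_{i=1}^{2l}i+\sum_{j=1}^{l}\bigl(a_j+\tfrac12\bigr)=|\lambda|$ (using $\sum_j a_j=|\lambda|-l(l+1)$), while all the signs cancel, the row contribution $(-1)^{l(2l+1)}=(-1)^l$ being matched by the column contribution $(-1)^l$. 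This produces precisely the prefactor $\xi^{|\lambda|}$ and reduces the claim to the purely combinatorial Pochhammer determinant identity
\[
\det\!\left[\frac{(z)_{a_j+i+1}}{(a_j+i+1)!},\frac{(z)_{a_j+i}}{(a_j+i)!}\right]
=(z)_{\lambda,2}(z-1)_{\lambda,2}\det\!\left[\frac{1}{(a_j+i+1)!},\frac{1}{(a_j+i)!}\right].
\]

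The heart of the argument, and what I expect to be the main obstacle, is this last identity, because the factor $(z)_{a_j+i+\epsilon}$ depends on the row and column indices jointly and so cannot be extracted by scaling rows or columns; indeed the proportionality constant $(z)_{\lambda,2}(z-1)_{\lambda,2}$ genuinely depends on $\lambda$, so the identity is not a mere rescaling of one determinant into the other. Two routes seem viable. The first is to observe that both sides are polynomials in $z$ and to prove the identity for every positive integer $z=N$: there $g(n):=(z)_n/n!=\binom{N+n-1}{n}$ is a polynomial in $n$ of degree $N-1$, so the left-hand determinant can be evaluated by a Lindström--Gessel--Viennot / generalized Vandermonde computation and compared with the explicit product form of the factorial determinant, after which polynomiality in $z$ yields the general case. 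The second, more structural route is to recognize the left-hand determinant as the value of the symplectic-type (dual Jacobi--Trudi) determinant attached to $\lambda$ under the principal specialization $E(u)=(1+\sqrt{\xi}\,u)^{-z}$, and to evaluate it by the same mechanism that Borodin and Olshanski use to compute $H(\lambda,2)H'(\lambda,2)$ (the proof of their Lemma 3.5), now carrying along the Pochhammer weights; since that computation is precisely the input already invoked for (\ref{Equationz2}), it should reproduce the factor $(z)_{\lambda,2}(z-1)_{\lambda,2}$ directly. Either way, once this determinant identity is established, Proposition \ref{Propositionz52} follows at once.
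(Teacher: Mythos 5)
Your proposal is correct and your first route is essentially the paper's own proof: the paper likewise specializes to a positive integer $z=N$ (where the entries become binomial coefficients times powers of $\sqrt{\xi}$), asserts the explicit factorization of the resulting determinant into a $\lambda$-dependent product of Pochhammer symbols times the factorial determinant of (\ref{Equationz2}), and concludes by analytic continuation (i.e.\ polynomiality in $z$), exactly as you propose. Your preliminary extraction of the $\xi^{|\lambda|}$ and sign factors is a harmless reorganization, and the integer-$z$ determinant evaluation that you flag as the main remaining obstacle is stated without derivation in the paper as well, so you are not missing any idea that the paper supplies.
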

\begin{proof}
Assume first that $z=N$. Then, using equation
(\ref{zrealization}), we find  that
\begin{equation}\label{Equationz510}
\pi_{z=N}\{e_k\}=\xi^{\frac{k}{2}}\frac{(N-k+1)(N-k+2)\ldots
N}{k!}=\xi^{\frac{k}{2}}\left(\begin{array}{c}
  N \\
  k \\
\end{array}
\right).
\end{equation}
Let us compute the determinant
$$
\det\left[\pi_{z}\{e_{\lambda_j-2j+i+1}\},\pi_{z}\{e_{\lambda_j-2j+i}\}\right],
$$
where $1\leq i\leq 2l(\lambda)$,  $1\leq j\leq l(\lambda)$, and
the parameter $z$ equals $N$. Taking into account equation
(\ref{Equationz510}) we find
\begin{equation}
\begin{split}
&\det\left[\pi_{z=N}\{e_{\lambda_j-2j+i+1}\},\pi_{z=N}\{e_{\lambda_j-2j+i}\}\right]\\
&=\det\left[\xi^{\frac{\lambda_j-2j+i+1}{2}}\left(\begin{array}{c}
  N \\
  \lambda_j-2j+i+1 \\
\end{array}\right), \xi^{\frac{\lambda_j-2j+i}{2}}\left(\begin{array}{c}
  N \\
  \lambda_j-2j+i \\
\end{array}\right)\right]\\
&=\xi^{|\lambda|}\frac{N!(N+1)!}{(N-\lambda_1)!(N-\lambda_1+1)}
\frac{(N+2)!(N+3)!}{(N-\lambda_2+2)!(N-\lambda_2+3)}\times\ldots\\
&\times\frac{(N+2l(\lambda)-2)!(N+2l(\lambda)-1)!}{(N-\lambda_{l(\lambda)}+2l(\lambda)-2)!(N-\lambda_{l(\lambda)}+2l(\lambda)-1)!}
\det\left[\frac{1}{(\lambda_j-2j+i+1)!},\frac{1}{(\lambda_j-2j+i)!}\right].
\end{split}
\nonumber
\end{equation}
This equation can be further rewritten as follows
\begin{equation}
\begin{split}
&\det\left[\pi_{z=N}\{e_{\lambda_j-2j+i+1}\},\pi_{z=N}\{e_{\lambda_j-2j+i}\}\right]\\
&=\xi^{|\lambda|}\prod\limits_{i=1}^{l(\lambda)}(-N-2i+2)_{\lambda_i}(-N-1-2i+2)_{\lambda_i}
\det\left[\frac{1}{(\lambda_j-2j+i+1)!},\frac{1}{(\lambda_j-2j+i)!}\right],
\end{split}
\nonumber
\end{equation}
where $1\leq i\leq 2l(\lambda)$, and $1\leq j\leq l(\lambda)$. As
soon as
$(z)_{\lambda,2}=\prod\limits_{i=1}^{l(\lambda)}(z-2i+2)_{\lambda_i}$,
we see that the equation
\begin{equation}\label{Equationz11}
\begin{split}
&\det\left[\pi_{z}\{e_{\lambda_j-2j+i+1}\},\pi_{z}\{e_{\lambda_j-2j+i}\}\right]\\
&=\xi^{|\lambda|}(-z)_{\lambda}(-z-1)_{\lambda}
\det\left[\frac{1}{(\lambda_j-2j+i+1)!},\frac{1}{(\lambda_j-2j+i)!}\right]
\end{split}
\end{equation}
holds true when $z=N$. By an analytic continuation, equation
(\ref{Equationz11}) also holds true for an arbitrary complex $z$.
Comparing equations (\ref{Equationz59}) and (\ref{Equationz11}) we
obtain the formula in the statement of the Proposition.
\end{proof}
Now we are ready to complete the proof of Theorem
\ref{Theoremzmeasure2}. Proposition \ref{Propositionz52} implies
that $\MC_{\Schur,2}$ is reduced to $\MC_{z,z-1,2,\xi}$ when the
specialization $\pi$ in the definition of $\MC_{\Schur,2}$ is
$\pi_{-z}$. Therefore, the correlation function of
$\MC_{z,z-1,2,\xi}$ can be obtained from Theorem
\ref{MainTheorem}. We only need to compute $\pi_{-z}\{E(w)\}$, and
$\pi_{-z}\{E(\frac{1}{w})\}$. When $\pi=\pi_z$, where $\pi_{z}$ is
the specialization of the algebra $\Lambda$ of symmetric functions
defined by equation (\ref{zrealization}), we find
\begin{equation}\label{EquationpiE(z)}
\pi_{z}\{E(w)\}=\sum\limits_{n=0}^{\infty}(-1)^n\frac{(-z)_n}{n!}(\sqrt{\xi}w)^n=(1+\sqrt{\xi}w)^{z},
\end{equation}
provided that the condition $|w|<\xi^{-1/2}$ is satisfied. We also
obtain
\begin{equation}\label{EquationpiE(z)1}
\pi_{z}\{E(\frac{1}{w})\}=\sum\limits_{n=0}^{\infty}(-1)^n\frac{(-z)_n}{n!}(\frac{\sqrt{\xi}}{w})^n=
\left(1+\frac{\sqrt{\xi}}{w}\right)^{z},
\end{equation}
where it is assumed that $|w|>\xi^{1/2}$. If
$\xi^{1/2}<|w|<\xi^{-1/2}$ both equations (\ref{EquationpiE(z)})
and (\ref{EquationpiE(z)1}) hold true. As  $0<\xi<1$, we can
choose the contours $\{w_1\}$, $\{w_2\}$ as in Theorem
\ref{MainTheorem}, with the additional requirement that
$\xi^{-1/2}$ lies outside both contours $\{w_1\}$ and $\{w_2\}$.
This gives formulae in Theorem \ref{Theoremzmeasure2}, a).

With the choice of the contour $\{w\}$ as in the statement of
Theorem \ref{Theoremzmeasure2}, the formula for the function
$\Phi_k(x)$ (see equation (\ref{Equation3.9})) takes the form
$$
\Phi_k(x)=\frac{1}{2\pi
i}\oint\limits_{\{w\}}(1+\sqrt{\xi}w)^{-z}(1+\frac{\sqrt{\xi}}{w})^{z}\frac{dw}{w^{k+x+1}}.
$$
The integral in the right-hand side of the formula above can be
computed, and the function $\Phi_k(x)$ can be expressed as
$$
\Phi_k(x)=(-1)^{k+x}\xi^{\frac{k+x}{2}}(1-\xi)^{z}(z)_{k+x}\frac{F(-z+1,-z;k+x+1;\frac{\xi}{\xi-1})}{\Gamma(k+x+1)},
$$
where $F(a,b;c;u)$ denotes the Gauss hypergeometric function.
Equation (\ref{EquationSSchur2asSUM}) gives the representation for
the function $S_{z,z-1,2,\xi}(x,y)$ as in Theorem
\ref{Theoremzmeasure2}, b).\qed
\section{The Pfaffian process defined by the the Plancherel measure with the Jack parameter $\theta=2$}
\begin{thm}\label{TheoremPlancherelMeasure2}
a) For $\MC_{\infty,\infty,2,\eta}$ defined by equation
(\ref{EquationPlancherelInfyMixed}) the point configurations
$\DC_2(\lambda)=\{\lambda_i-2i\}_{i=1}^{\infty}\subset\Z$ form a
Pfaffian point process. This means that the correlation function
$\varrho_{\infty,\infty,2,\eta}$ of $\MC_{\infty,\infty,2,\eta}$
defined by
$$
\varrho_{\infty,\infty,2,\eta}=\MC_{\infty,\infty,2,\eta}\left(\{\lambda|X\subset\DC_2(\lambda)\}\right),
$$
(where $X=(x_1,\ldots,x_n)$ is any fixed subset of $\Z$ consisting
of $n$ distinct points) can be represented as a Pfaffian of a
$2\times 2$ matrix valued kernel. Namely,
$$
\varrho_{\infty,\infty,2,\eta}=\Pf\left[K_{\infty,\infty,2,\eta}(x_i,x_j)\right]_{i,j=1}^n,
$$
where
$$
K_{\infty,\infty,2,\eta}(x,y)=\left[\begin{array}{cc}
  S_{\infty,\infty,2,\eta}(x+1,y+1) & -S_{\infty,\infty,2,\eta}(x+1,y) \\
  -S_{\infty,\infty,2,\eta}(x,y+1) & S_{\infty,\infty,2,\eta}(x,y) \\
\end{array}\right],
$$
and
\begin{equation}
\begin{split}
&S_{\infty,\infty,2,\eta}(x,y)\\
&=\frac{1}{(2\pi i)^2}\oint\limits_{\{w_1\}}\oint\limits_{\{w_2\}}
\frac{dw_1dw_2}{w_1^{x}w_2^{y}}\exp[\sqrt{2}\eta(w_1-w_1^{-1}+w_2-w_2^{-1})]\frac{(w_2-w_1)}{(w_1w_2-1)(w_1^2-1)(w_2^2-1)}.
\end{split}
\nonumber
\end{equation}
Here $\{w_1\}$, $\{w_2\}$ are simple contours which both go around
$0$ in positive direction, and they are chosen in such a
way  that $|w_1|>1$, $|w_2|>1$.\\
b) The function $S_{\infty,\infty,2,\eta}(x,y)$ can be written as
$$
S_{\infty,\infty,2,\eta}(x,y)=\sum\limits_{k,m=1}^{\infty}J_{k+x}(2\sqrt{2}\eta)\left(\Upsilon\right)_{k,m}
J_{m+x}(2\sqrt{2}\eta),
$$
where $\Upsilon$ is defined by equation (\ref{EquationUpsilon}),
and where $J_{\nu}(x)$ is the Bessel function, see Ref.
\cite{functions}, 7.2.4.
\end{thm}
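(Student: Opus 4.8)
The plan is to realize the Poissonized Plancherel measure $\MC_{\infty,\infty,2,\eta}$ as a Pfaffian Schur measure $\MC_{\Schur,2}$ for a single explicit specialization of $\Lambda$, and then to read both parts off Theorem \ref{MainTheorem}, in exact parallel with the way Theorem \ref{Theoremzmeasure2} was deduced from Theorem \ref{MainTheorem} through Proposition \ref{Propositionz52}. First I would introduce the specialization $\pi$ of the algebra $\Lambda$ defined by $\pi\{e_k\}=(\sqrt{2}\eta)^{k}/k!$ for $k\ge 0$, and claim that the measure $\MC_{\Schur,2}$ associated with this $\pi$ coincides with $\MC_{\infty,\infty,2,\eta}$ given by (\ref{EquationPlancherelInfyMixed}).

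To verify the claim I would start from the determinantal identity (\ref{Equationz2}) of Proposition \ref{Proposition6.1}, which rewrites $1/(H(\lambda,2)H'(\lambda,2))$ as $\det[1/(\lambda_j-2j+i+1)!,\,1/(\lambda_j-2j+i)!]$. Substituting $\pi\{e_k\}=(\sqrt{2}\eta)^{k}/k!$ into $\det[\pi\{e_{\lambda_j-2j+i+1}\},\pi\{e_{\lambda_j-2j+i}\}]$ and pulling the factor $(\sqrt{2}\eta)^{i}$ out of each of the $2l(\lambda)$ rows and the complementary powers out of the two columns attached to each part $\lambda_j$, the exponents collect to $\sum_{i=1}^{2l(\lambda)} i+\sum_{j}(2\lambda_j-4j+1)=2|\lambda|$; hence this determinant equals $(2\eta^{2})^{|\lambda|}/(H(\lambda,2)H'(\lambda,2))$. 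Unlike Proposition \ref{Propositionz52} no analytic continuation is needed here, since the entries are already in closed product form. Comparing with (\ref{EquationPlancherelInfyMixed}) identifies $\MC_{\Schur,2}$ with $\MC_{\infty,\infty,2,\eta}$ and forces $\ZC_{\Schur,2}=e^{\eta^{2}}$, a finite positive normalization; thus the formal Pfaffian identity of Theorem \ref{MainTheorem} becomes a numerical equality for the genuine probability measure $\MC_{\infty,\infty,2,\eta}$.

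For part a) it then remains to evaluate the symbol. Since $\pi\{E(w)\}=\sum_{k\ge 0}(\sqrt{2}\eta w)^{k}/k!=e^{\sqrt{2}\eta w}$ is entire, the ratio $\pi\{E(w)\}/\pi\{E(1/w)\}$ equals $\exp[\sqrt{2}\eta(w-w^{-1})]$ and carries no poles, so any contours with $|w_1|,|w_2|>1$ are admissible and the pole-avoidance conditions of Theorem \ref{MainTheorem} are vacuous. Inserting this ratio into the contour integral for $S_{\Schur,2}(x,y)$ in Theorem \ref{MainTheorem}, a) reproduces verbatim the asserted formula for $S_{\infty,\infty,2,\eta}(x,y)$.

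For part b) I would specialize (\ref{Equation3.9}): with the entire symbol above, $\Phi_k(x)=\frac{1}{2\pi i}\oint_{\{w\}}e^{\sqrt{2}\eta(w-w^{-1})}\,dw/w^{k+x+1}$ is exactly the coefficient of $w^{k+x}$ in $\exp[\sqrt{2}\eta(w-w^{-1})]$. By the classical generating function $\exp[(t/2)(w-w^{-1})]=\sum_{n}J_n(t)\,w^{n}$ with $t=2\sqrt{2}\eta$, this coefficient is $J_{k+x}(2\sqrt{2}\eta)$, so $\Phi_k(x)=J_{k+x}(2\sqrt{2}\eta)$; feeding this into (\ref{EquationSSchur2asSUM}) with the matrix $\Upsilon$ of (\ref{EquationUpsilon}) gives the Bessel representation of part b). I expect the only real work to be the power bookkeeping in the determinant reduction of the second paragraph, together with the routine verification that the double sum against $\Upsilon$ and the iterated contour integral converge --- both guaranteed by the entireness of $\pi\{E(w)\}$ --- so I anticipate no serious obstacle, the theorem being essentially a clean specialization of Theorem \ref{MainTheorem}.
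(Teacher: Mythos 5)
Your proposal is correct and follows essentially the same route as the paper: specialize $\Lambda$ so that $\pi\{e_k\}=(\sqrt{2}\eta)^k/k!$ (the paper phrases this via $p_1$, equivalently), identify $\MC_{\Schur,2}$ with $\MC_{\infty,\infty,2,\eta}$ through the determinantal formula for $1/(H(\lambda,2)H'(\lambda,2))$, and then read both parts off Theorem \ref{MainTheorem} with $\pi\{E(w)\}=e^{\sqrt{2}\eta w}$ and the Bessel generating function. You merely make explicit the power bookkeeping and the coefficient extraction that the paper leaves implicit.
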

\begin{proof}
Consider the specialization $\pi_{\infty}^{(\sqrt{2}\eta)}$ of the
algebra $\Lambda$ of symmetric functions defined by
\begin{equation}\label{EquationPlancherelSpecialization}
\pi_{\infty}^{(\sqrt{2}\eta)}(p_k)=\left\{%
\begin{array}{ll}
    \eta, & k=1, \\
    0, & k=2,3,\ldots . \\
\end{array}%
\right.
\end{equation}
In this specialization we have
\begin{equation}\label{EquationPiEk}
\pi_{\infty}^{(\sqrt{2}\eta)}\{e_k\}=\frac{(\sqrt{2}\eta)^k}{k!},\;\mbox{and}\;
\pi_{\infty}^{(\sqrt{2}\eta)}\{E(w)\}=\exp{(\sqrt{2}\eta w)}.
\end{equation}
Next observe that $\MC_{\Schur,2}$ is reduced to
$\MC_{\infty,\infty,2,\eta}$, if the specialization $\pi$ in the
definition of $\MC_{\Schur,2}$ coincides with the specialization
$\pi_{\infty}^{(\sqrt{2}\eta)}$. This follows from equations
(\ref{DefinitionPfaffianSchurMeasure}), and from the fact that
\begin{equation}\label{EquationPlancherelasDeterminant}
\MC_{\infty,\infty,2,\eta}(\lambda)=e^{-\eta^2}\det\left[
\frac{(\sqrt{2}\eta)^{\lambda_j-2j+i+1}}{(\lambda_j-2j+i+1)!},
\frac{(\sqrt{2}\eta)^{\lambda_j-2j+i}}{(\lambda_j-2j+i)!}\right],
\nonumber
\end{equation}
where $1\leq i\leq 2l(\lambda)$, $1\leq j\leq l(\lambda)$. We then
conclude that the formula for $\varrho_{\Plancherel,2}$ must have
the same form as that for $\varrho_{\Schur,2}$. As
$\pi=\pi_{\Plancherel}^{(\sqrt{2}\eta)}$ we check (taking into
account equation (\ref{EquationPiEk})) that the formulae  obtained
in Theorem \ref{MainTheorem} are reduced to the formulae
 in the statement of
Theorem \ref{TheoremPlancherelMeasure2}.
\end{proof}


\begin{thebibliography}{99}
\bibitem{BaikDeiftJohansson}
 Baik, J.; Deift, P.; Johansson, K.
 On the distribution of the length of the longest increasing subsequence of random permutations.
  J. Amer. Math. Soc. 12 (1999), no. 4, 1119--1178.

\bibitem{BaikRains1}
Baik, J.; Rains, E. M. The asymptotics of monotone subsequences of
involutions. Duke Math. J. 109 (2001), no. 2, 205--281.

\bibitem{BaikRains2}
Baik, J.; Rains, E. M. Symmetrized random permutations. Random
matrix models and their applications, 1--19, Math. Sci. Res. Inst.
Publ., 40, Cambridge Univ. Press, Cambridge, 2001.

\bibitem{BaikRains3}
 Baik, Jinho;
Rains, Eric M. Algebraic aspects of increasing subsequences. Duke
Math. J. 109 (2001), no. 1, 1--65.

\bibitem{BorodinPeriodic}
Borodin, A.
Periodic Schur process and cylindric partitions. Duke Math. J. 140
(2007), no. 3, 391--468.

\bibitem{borodin1}
Borodin, A.; Olshanski, G. Distributions on partitions, point
processes, and the hypergeometric kernel. Comm. Math. Phys. 211
(2000), no. 2, 335--358.

\bibitem{BorodinOkounkovOlshanski}
Borodin, A.; Okounkov, A.; Olshanski, G. Asymptotics of Plancherel
measures for symmetric groups. J. Amer. Math. Soc. 13 (2000), no.
3, 481--515

\bibitem{borodin2}
Borodin, A.; Olshanski, G. $z$-measures on partitions,
Robinson-Schensted-Knuth correspondence, and $\beta=2$ random
matrix ensembles. Random matrix models and their applications,
71--94, Math. Sci. Res. Inst. Publ., 40, Cambridge Univ. Press,
Cambridge, 2001.

\bibitem{borodin3}
Borodin, A.; Olshanski, G. Meixner polynomials and random
partitions. Mosc. Math. J. 6 (2006), no. 4, 629--655, 771.

\bibitem{BO}
Borodin, A.; Olshanski, G. $Z$-measures on partitions and their
scaling limits. European J. Combin. 26 (2005), no. 6, 795--834.

\bibitem{BOHARMONICFUNCTIONS}
 Borodin, A.; Olshanski, G.
 Harmonic functions on multiplicative graphs and interpolation polynomials.
 Electron. J. Combin. 7 (2000), Research Paper 28, 39 pp.

 \bibitem{borodin4} Borodin, A.; Olshanski, G.; Strahov, E. Giambelli
compatible point processes. Adv. in Appl. Math. 37 (2006), no. 2,
209--248.

 \bibitem{borodinstrahov}
Borodin, A.; Strahov, E. Averages of characteristic polynomials in
random matrix theory. Comm. Pure Appl. Math. 59 (2006), no. 2,
161--253.

\bibitem{borodinstrahov1}
Borodin, A.; Strahov, E. Correlation kernels for discrete
symplectic and orthogonal ensembles. arXiv:0712.1693v2.

\bibitem{borodinrains}
Borodin, A.; Rains, E. M. Eynard-Mehta theorem, Schur process, and
their Pfaffian analogs. J. Stat. Phys. 121 (2005), no. 3-4,
291--317.

\bibitem{Deift}
Deift, P. Four lectures on random matrix theory. Asymptotic
combinatorics with applications to mathematical physics (St.
Petersburg, 2001), 21--52, Lecture Notes in Math., 1815, Springer,
Berlin, 2003.

\bibitem{Deift1}
Deift, P. A. Orthogonal polynomials and random matrices: a
Riemann-Hilbert approach. Courant Lecture Notes in Mathematics, 3.
New York University, Courant Institute of Mathematical Sciences,
New York; American Mathematical Society, Providence, RI, 1999.



\bibitem{ferrari}
Ferrari, P. L. Polynuclear growth on a flat substrate and edge
scaling of GOE eigenvalues. Comm. Math. Phys. 252 (2004), no. 1-3,
77--109.

\bibitem{FerrariPrahover}
 Ferrari, P. L.; Prähofer, M. One-dimensional stochastic growth
  and Gaussian ensembles of random matrices.
  Markov Process. Related Fields 12 (2006), no. 2, 203--234.

\bibitem{forrester0}
Forrester, P.J. Log-gases and Random Matrices.
http://www.ms.unimelb.edu.au/$\sim$matpjf/matpjf.html.

\bibitem{forrester}
Forrester, P.J.; Nagao, T.; Rains, E.M. Correlation functions for
random involutions. Intern. Math. Res. Notices (2006) article ID
89796, 35 pages.

\bibitem{Johansson1}
 Johansson, K. Toeplitz
determinants, random growth and determinantal processes. Proceedings
of the International Congress of Mathematicians, Vol. III (Beijing,
2002), 53--62, Higher Ed. Press, Beijing, 2002.

\bibitem{Johansson2}
 Johansson, K.
Non-intersecting paths, random tilings and random matrices. Probab.
Theory Related Fields 123 (2002), no. 2, 225--280.

\bibitem{Johansson3}
 Johansson, K.
Random growth and random matrices. European Congress of Mathematics,
Vol. I (Barcelona, 2000), 445--456, Progr. Math., 201, Birkhäuser,
Basel, 2001.

\bibitem{functions}
Higher Transcendental functions, bateman Manuscript Project,
McGraw-Hill, New York, 1953.

\bibitem{harnard}
Harnard, J.; Orlov, A.Yu. Fermionic construction of tau fanctions
and random processes. arXiv:0704.1157v1.

\bibitem{kerov}
Kerov, S. Anisotropic Young diagrams and Jack symmetric functions.
Funktsion. Anal. i. Prilozhen. 34 (1) (2000) 51--64 (in Russian).

\bibitem{kerov1}
Kerov, S.; Olshanski, G.; Vershik, A. Harmonic analysis on the
infinite symmetric group. Invent. Math. 158 (2004), no. 3,
551--642.

\bibitem{macdonald}
Macdonald, I.G. Symmetric functions and Hall polynomials. Second
Edition, Oxford University Press, 1995.

\bibitem{mehta}
Mehta, M. L. Random Matrices. 2nd ed. San Diego: Academic, 1991.

\bibitem{vanMoerbeke}
 van Moerbeke, P.   Random and Integrable Models in Mathematics and
Physics. arXiv:0712.3847.

\bibitem{okounkov1}
Okounkov, A. Infinite wedge and random partitions. Selecta Math.
(N.S.) 7 (2001), no. 1, 57--81.

\bibitem{okounkov2} Okounkov, A. Symmetric functions and random
partitions. Symmetric functions 2001: surveys of developments and
perspectives, 223--252, NATO Sci. Ser. II Math. Phys. Chem., 74,
Kluwer Acad. Publ., Dordrecht, 2002.

\bibitem{okounkov3}
 Okounkov, A. The uses of random partitions.
 XIVth International Congress on Mathematical Physics, 379--403,
 World Sci. Publ., Hackensack, NJ, 2005.

\bibitem{okounkov5}
Okounkov, A. ${\rm SL}(2)$ and $z$-measures. Random matrix models
and their applications, 407--420, Math. Sci. Res. Inst. Publ., 40,
Cambridge Univ. Press, Cambridge, 2001.

\bibitem{okounkov4}
 Okounkov, A.; Reshetikhin, N.
Correlation function of Schur process with application to local
geometry of a random 3-dimensional Young diagram. J. Amer. Math.
Soc. 16 (2003), no. 3, 581--603

\bibitem{olshanski2003}
Olshanski, G. An introduction to harmonic analysis on the infinite
symmetric group. Asymptotic combinatorics with applications to
mathematical physics (St. Petersburg, 2001), 127--160, Lecture
Notes in Math., 1815, Springer, Berlin, 2003.

\bibitem{PrahoverSpohn}
Prähofer, M.; Spohn, H. Statistical self-similarity of
one-dimensional growth processes. Statistical mechanics: from
rigorous results to applications. Phys. A 279 (2000), no. 1-4,
342--352.


\bibitem{rains}
Rains, E. Correlation functions for symmetrized increasing
subsequences. arXiv:math/0006097v1

\bibitem{sasamoto}
Sasamoto, T.; Imamura, T. Fluctuations of the one-dimensional
polynuclear growth model in half-space. J. Stat. Phys. 115 (2004),
749-803.

\bibitem{soshikov}
Soshnikov, A. Determinantal random point fields. (Russian) Uspekhi
Mat. Nauk 55 (2000), no. 5(335), 107--160; translation in Russian
Math. Surveys 55 (2000), no. 5, 923--975.

\bibitem{tracy}
Tracy, C.A.; Widom, H. Correlation functions, cluster functions
and spacing distributions for random matrices. \textit{J. Stat.
Phys.} \textbf{96} (1998) 809-835.

\bibitem{vuletic}
Vuleti\'c, M . Shifted Schur process and asymptotics of large
random strict plane partitions. Intern. Math. Res. Notices (2007)
article ID rnm043, 53 pages.

\end{thebibliography}
\end{document}